\newtheorem{problem}{Problem}
\newcolumntype{C}[1]{>{\centering\arraybackslash}p{#1}}
\newcommand{\cupdot}{\mathbin{\mathaccent\cdot\cup}}
\begin{document}

\title[Statistical-Symbolic Verification of Perception-Based Autonomous Systems]{Statistical-Symbolic Verification of Perception-Based Autonomous Systems using State-Dependent Conformal Prediction}



\author{Yuang Geng}
\authornote{Both authors contributed equally to this research.}
\email{yuang.geng@ufl.edu}
\orcid{0000−0002−2265−7586}
\affiliation{%
  \institution{University of Florida}
  \city{Gainesville}
  \state{Florida}
  \country{USA}
}

\author{Thomas Waite}\authornotemark[1]
\email{waitet@rpi.edu}
\orcid{0009-0009-9925-6789}
\affiliation{%
  \institution{Rensselaer Polytechnic Institute}
  \city{Troy}
  \state{New York}
  \country{USA}
}

\author{Trevor Turnquist}
\email{trevor.turnquist@ufl.edu}
\orcid{0009-0000-0900-6001}
\affiliation{%
  \institution{University of Florida}
  \city{Gainesville}
  \state{Florida}
  \country{USA}
}

\author{Radoslav Ivanov}
\authornote{Co-last authors: equally shared supervision responsibility.}
\email{ivanor@rpi.edu}
\orcid{0000-0003-4987-4836}
\affiliation{%
  \institution{Rensselaer Polytechnic Institute}
  \city{Troy}
  \state{New York}
  \country{USA}
}

\author{Ivan Ruchkin}\authornotemark[2] 
\email{iruchkin@ece.ufl.edu} 
\orcid{0000-0003-3546-414X}
\affiliation{%
  \institution{University of Florida}
  \city{Gainesville}
  \state{Florida}
  \country{USA}
}

\renewcommand{\shortauthors}{Geng et al.}

\begin{abstract}
Reachability analysis has been a prominent way to provide safety guarantees for neurally controlled autonomous systems, but its direct application to neural perception components is infeasible due to imperfect or intractable perception models. Typically, this issue has been bypassed by complementing reachability with statistical analysis of perception error, say with conformal prediction (CP). However, existing CP methods for time-series data often provide conservative bounds. The corresponding error accumulation over time has made it challenging to combine statistical bounds with symbolic reachability in a way that is provable, scalable, and minimally conservative. To reduce conservatism and improve scalability, our key insight is that perception error varies significantly with the system's dynamical state. This article proposes \textit{state-dependent conformal prediction}, which exploits that dependency in constructing tight high-confidence bounds on perception error. Based on this idea, we provide an approach to partition the state space, using a genetic algorithm, so as to optimize the tightness of conformal bounds. Finally, since using these bounds in reachability analysis leads to additional uncertainty and branching in the resulting hybrid system, we propose a branch-merging reachability algorithm that trades off uncertainty for scalability so as to enable scalable and tight verification. The evaluation of our verification methodology on two complementary case studies demonstrates reduced conservatism compared to the state of the art. 

\end{abstract}

\begin{CCSXML}
<ccs2012>
   <concept>
       <concept_id>10010520.10010553</concept_id>
       <concept_desc>Computer systems organization~Embedded and cyber-physical systems</concept_desc>
       <concept_significance>300</concept_significance>
       </concept>
   <concept>
       <concept_id>10002944.10011123.10011676</concept_id>
       <concept_desc>General and reference~Verification</concept_desc>
       <concept_significance>500</concept_significance>
       </concept>
   <concept>
       <concept_id>10010147.10010257.10010293.10010294</concept_id>
       <concept_desc>Computing methodologies~Neural networks</concept_desc>
       <concept_significance>300</concept_significance>
       </concept>
   <concept>
       <concept_id>10010147.10010341.10010342.10010345</concept_id>
       <concept_desc>Computing methodologies~Uncertainty quantification</concept_desc>
       <concept_significance>300</concept_significance>
       </concept>
 </ccs2012>
\end{CCSXML}

\ccsdesc[300]{Computer systems organization~Embedded and cyber-physical systems}
\ccsdesc[500]{General and reference~Verification}
\ccsdesc[300]{Computing methodologies~Neural networks}
\ccsdesc[300]{Computing methodologies~Uncertainty quantification}

\keywords{neural network verification,  pixel-to-action control, conformal prediction}

\maketitle

\section{Introduction}

Neural perception and control components are now widely deployed in autonomous systems, from driverless cars to air taxis and delivery robots~\cite{waymo, volocity, serveRobotics2025}. 
These systems use high-dimensional sensors (cameras, LiDAR) to perceive their environment and make safety-critical decisions, and while neural perception and control components have demonstrated impressive capabilities in executing these complex tasks, their unpredictable behavior can lead to serious accidents~\cite{waymoRecall2025, cruiseAccident2024}. Therefore, providing formal safety guarantees for such systems is essential. However, providing safety guarantees for such systems is difficult due to the complex data and perception components required to operate in stochastic, high-dimensional environments. 

Formal methods like reachability analysis offer a principled approach to verify safety~\cite{dutta2019reachability}. However, they face a fundamental challenge: accurately modeling high-dimensional sensor data and perception pipelines. While, explicit perception models based on physics (e.g., pinhole camera models, LiDAR ray tracing) are tractable. They cannot capture the stochastic complexity of real-world sensing (e.g., reflected LiDAR rays) ~\cite{habeeb2024interval,santa2023certified, ivanov20a}. Learned generative models can be used to capture this complexity~\cite{katz22}, but verifying high-dimensional, pixel-level models in a closed-loop setting remains difficult to scale~\cite{everett2021neural}. Moreover, the validity of generative models as surrogates for the real environment remains in question.
On the other hand, data-driven methods like conformal prediction (CP) can provide probabilistic guarantees that capture the uncertainty of the environment, yet these approaches often fail to exploit knowledge of the system (e.g., dynamics) and result in overly conservative safety guarantees~\cite{pmlr-v242-chee24a}.

Emerging approaches for providing safety guarantees on neural perception and control systems use a \textit{combination} of data-driven and symbolic methods~\cite{lin2024verification,muthali2023multi,geng2024bridging, waite2025stateconform}. Such approaches use data-driven methods to get statistical uncertainty bounds on difficult-to-model quantities (e.g., the neural perception error) and then use those bounds in symbolic reachability analysis with analytical descriptions of known components (e.g., dynamics) to produce high-confidence reachable sets. Effectively, this two-step approach can abstract away the high-dimensional perception models while preserving knowledge of the system. 

However, when the combination of symbolic and data-driven methods is applied naively, it can result in both \textit{unnecessary conservatism} and \textit{poor scalability}, especially in the challenging case of time-series data. If the data-driven bounds are too loose, then the conservatism compounds over time during reachable set computation, particularly as over-approximation error accumulates over long time horizons. A variety of methods have been proposed to reduce conservatism in conformal bound size \cite{romano2019conformalized, sharma2024pac,kiyani2024length,tumu2024multi}, with a particularly promising approach for reducing conservatism in conformal bounds that finds an optimal reweighting of conformal scores over time \cite{cleaveland2024conformal}. This time-based optimization produces much tighter bounds as compared to point-wise bounds, which must rely on a union bound per timestep \cite{lindemann2023safe}. In practice, the optimized bounds can exploit the heteroskedasticity of the prediction errors over time, reducing the impact of particularly high-error examples on the remainder of the trajectory. While reducing bound sizes using a method like this should help reduce conservatism in the downstream reachability task, existing approaches do not explicitly optimize bounds with reachability in mind. 

Even if the bounds are reasonably tight for statistical purposes, verification tools still may not be able to produce a valid reachable due to well-known scalability challenges of verifying systems under persistent uncertainty (e.g.,~state-space explosion)~\cite{alur2011formal, taleb2023uncertainty, gao2024robust, frehse2011spaceex}. Once again, there has been some work to improve verification scalability under uncertainty \cite{maiga2015comprehensive, ramdani2009hybrid, ghosh2019robust}, but it remains a challenging problem. These two-sided issues suggest the need for a \textit{targeted interface} between the statistical and symbolic steps, to simultaneously address conservatism in data-driven bounds and increase the scalability of reachable set computation under uncertainty.

Though optimizing conformal bounds to exploit perception error heteroskedasticity over time \cite{cleaveland2024conformal} seems like an intuitive place to start interfacing statistical and symbolic methods,  perception error is not always correlated with \textit{time}. Instead, heteroskedasticity in error is often better correlated with the \textit{state}. For example, camera images may be blurred at high speed, or LiDAR rays may be reflected in particular orientations of the sensor with its environment. Therefore, perception error is also often heteroskedastic with respect to the \emph{state}.

In our preliminary work \cite{waite2025stateconform}, we applied this insight and introduced \textbf{state-dependent conformal bounds} for neural perception error as an approach to reduce conservatism in safety analysis of systems with known dynamics and neural perception in the loop. 
Our approach exploits error heteroskedasticity over the state space by dividing it into regions. We then propose an optimization problem to select optimal regions such that regional perception errors contribute minimally to over-approximation error in symbolic reachability computation. Although promising as a first step, the original approach suffers from scalability limitations due to overly conservative bounds and rapid accumulation of approximation error during verification.

In this work, we explore two complementary techniques to improve the utility and scalability of the statistical-symbolic approach. First, we improve the conformal bound sizes and apply them to higher-dimensional systems. To do this, we refine our optimization problem to include multi-dimensional state regions as well as dynamic per-region confidences. Multi-dimensional state partitioning permits the use of our method on more realistic, higher-dimensional systems, and dynamic per-region confidences improve our ability to exploit the heteroskedasticity (as compared to the fixed confidences in our preliminary work) while still maintaining the trajectory-wide, user-defined confidence level. 

Second, we provide verifiability improvements to the symbolic reachability analysis of hybrid systems under uncertainty. 
In hybrid systems, if the reachable sets are large or subject to persistent uncertainty, they can enter multiple modes with distinct dynamics (e.g., ``cruise control'' mode or ``avoid obstacle'' mode) at the same time. When this happens, each resulting branch must be verified separately, leading to prohibitively long run times. This challenge is exacerbated in our setting, where we must inflate our reachable sets to account for the statistical uncertainty as well as introduce additional modes for each region. To address this challenge, we introduce a \textit{branch-merging algorithm} that consolidates similar branches (with overlapping reachable sets) during verification to permit verification of previously unverifiable highly-branching systems.

We evaluate our approach on two case studies. First, a Mountain Car \cite{geng2024bridging} with an image-based neural perception model estimating a low-dimensional state. We find optimal regions in a single dimension of the state space and compute reachable sets under a neural network controller. The second case study is a 10th-scale autonomous racing car \cite{roboracer2025} with LiDAR-based neural perception operating in a higher-dimensional state space. We compute optimal, two-dimensional regions and compute reachable sets under a PID controller. This case study demonstrates the scalability of our methods to more realistic multi-dimensional systems with neural perception. 
In both case studies, our methods achieve significantly smaller reachable set sizes compared to the time-based approach \cite{cleaveland2024conformal}.
Our combined contributions are fourfold:

\begin{enumerate}
    \item \textbf{State-based safety guarantees for neural perception systems.} We introduce a methodology to exploit heteroskedastic perception error to obtain user-specified safety guarantees, supported by a conformal-prediction method that yields tighter, region-based perception error bounds.
    
    \item \textbf{Region partitioning for tight statistical bounds.} We introduce an algorithm to partition multi-dimensional state spaces for conformal prediction with per-region confidence allocations. 

    \item \textbf{Improved reachability under uncertainty.} We provide a branch-merging algorithm that trades off uncertainty for scalability and thus enables more scalable reachability analysis of hybrid systems.
    
    \item \textbf{Closed-loop evaluations on low and high-dimensional systems.} We demonstrate the benefits of our approach on both the Mountain Car benchmark with image-based perception and a LiDAR-based autonomous racing system with a multi-dimensional state space.
\end{enumerate}

In Section~\ref{sec:bg_and_relwork}, we provide essential background and related work. In Section~\ref{sec:prob} we formalize our problem statements. Section~\ref{sec:approach} overviews our approach, and Sections~\ref{sec:state_regions}~and~\ref{sec:reachability} detail our methods for optimizing conformal bounds and improving hybrid system verifiability, respectively. In Sections~\ref{sec:evaluation}~and~\ref{sec:conclusion}, we present the evaluation of our methods in two case studies and discuss results and next steps.

\section{Background and Related Work}\label{sec:bg_and_relwork}
Before introducing the problem setup, we present the necessary background and related work to establish our context.
We begin by defining our dynamical system with perception components as the conceptual basis for the entire analysis, as motivated by prior work on perception-based robust control~\cite{dean20}:
\label{sec:problem}
\begin{align}
\label{eq:system_model}
\begin{split}
    x_{k+1} = f(x_k, u_k);\quad 
    z_k = p(x_k);\quad 
    y_k = nn(z_k) := g(x_k) + v_k;\quad 
    u_k = h(y_k),
\end{split}
\end{align}
where $x_k \in \mathcal{X}$ are the system states (e.g.,~position, velocity); $z_k$ are the measurements (e.g., images) generated from an unknown perception map $p$ (e.g., camera); $y_k$ are the outputs of a perception neural component $nn$ trained to extract task-relevant low-dimensional state $g(x_k)$ (e.g., estimated positions); 
$v_k$ is the unknown perception noise introduced by the neural component $nn$; 
$f$ is the known plant dynamics model, and $h$ is a known and fixed controller.

The above system definition simplifies both control design and verification.
Due to decomposing perception as $y_k = g(x_k) + v_k$, the perception component estimates a low-dimensional, task-relevant function of the state (e.g., a car’s lateral position in the lane), while all uncertainty is captured entirely by the perception error term $v_k$.
This simplifies controller design, since the controller can be developed for $g(x_k)$ directly and made robust to bounded perception error $v_k$~\cite{dean20}.
It also simplifies verification, since reasoning about safety only requires bounding $v_k$, rather than modeling the full perception map. 
Together, these properties provide a solid foundation for high-confidence control and safety verification while abstracting away the complexity of perception.
Next, we present our background and related work.

\paragraph{Reachability Analysis.} 
Reachability analysis computes the set of all possible states that the a system can reach over time from an initial set. It provides safety guarantees by checking whether these reachable sets intersect with the unsafe region~\cite{huang2022polar}.
Reachability methods are typically divided into two categories: open-loop approaches, which focus on verifying input-output properties of neural networks~\cite{wang2021beta, dutta18, katz17, tran20}, and closed-loop approaches, which interleave neural networks with symbolic dynamics to compute reachable sets over multiple time steps~\cite{ivanov19, dutta2019reachability, huang2019reachnn, fan2020reachnn, wang2023polar}.

Our work focuses on closed-loop reachability analysis to provide safety guarantees of the full system over a specified time horizon. 
Classical reachability analysis requires specifying uncertainty on noise terms, typically as a worst-case bound that holds at all time steps.
For the system in~\eqref{eq:system_model}, given the controller $h$, dynamics $f$, an initial set $\mathcal{X}_0$, and a noise bound $b \ge \|v_k\| $, we compute sets $\mathcal{X}_1, \dots, \mathcal{X}_T$ that are guaranteed to contain all possible states $x_k$ at each time step $k = 1..T$ and check that no unsafe states are included.
These sets are typically conservatively approximated with tractable shapes such as ellipsoids \cite{althoff15} or Taylor models (TMs) \cite{wang2023polar, chen12, ivanov2021b}. TMs are the reachability representation used in this work. Introduced by~\cite{makino03}, TMs are particularly useful as they form valid over approximations for arbitrary differentiable functions, can be propagated through non-linear system dynamics, and have well-defined operations (addition, subtraction, multiplication, composition, etc). Formally, they are defined as follows: 

\begin{definition}[Taylor Model]
\label{def:TMs}
Given a $d$-times differentiable function $f: \mathbb{R}^n \rightarrow \mathbb{R}$ on a domain $D \subset \mathbb{R}^n$, an order $d$ Taylor Model is a tuple $T=(p, I)$ such that $p$ is a polynomial approximation of $f$ of degree $d$ and $I$ is an interval error bound of $p$ over $D$, with the following enclosure property: $\forall x\in D: \; f(x) \in \{p(x) + e \mid e \in I\}$.

\end{definition}

As mentioned above, standard reachability analysis requires bounds on noise $v_k$, ideally over the worst case. Obtaining such worst-case bounds, however, may require strong assumptions.
For instance, there exists work to identify failures of vision-based controllers with Hamilton-Jacobi reachability, but it relies on a manually chosen noise bound, which makes it impossible to get formal safety guarantees~\cite{chakraborty2023discovering}.
This limitation motivates a probabilistic perspective: instead of guaranteeing coverage for all possible states, we seek to compute reachable sets that contain the system’s state \emph{with high confidence} over random initial conditions and noise realizations. 
Such probabilistic reachable sets can support high-confidence pre-deployment guarantees, real-time safety monitoring, and motion planning in interactive environments.

\paragraph{Scalar Conformal Prediction.}
One approach to obtaining high-confidence noise bounds is via CP~\cite{vovk2005algorithmic,shafer2008tutorial}.
CP is an increasingly popular method for obtaining data-driven uncertainty bounds. 
As CP has expanded to a variety of applications, its role in autonomous system safety has gained particular attention, including safe motion planning \cite{lindemann2023safe}, safe controller design \cite{yang2023safe}, online safety monitoring \cite{zhang2024bayesian,zhao2024robust}, and integration into safety decision-making frameworks~\cite{lekeufack2024conformal}. 
We refer the reader to recent tutorials~\cite{lindemann2024formal,angelopoulos2023conformal} for a broader overview of the CP area.

In scalar CP~\cite{angelopoulos2023conformal}, we are given a calibration dataset $D = \{z_1, \dots, z_N\}$, where the $z_i$ are realizations of exchangeable random variables $Z_1, \dots, Z_N$.
Exchangeability means that the joint distribution is invariant under any reordering, ${\mathbb{P}[Z_{q(1)} \le \dots \le Z_{q(N)}] = \mathbb{P}[Z_{r(1)} \le \dots \le Z_{r(N)}]}$ 
for any two re-ordering functions $q$ and $r$. 
Now consider a new random variable $Z_{test}$ that is also exchangeable with $Z_1, \dots, Z_N$. 
Under this assumption, the rank of $Z_{test}$ among $N+1$ samples is uniformly distributed.
Therefore, if we sort $z_i$ in increasing order, resulting in ordered $z_{(1)}, z_{(2)}, \ldots , z_{(N)}$, we obtain
\begin{align}
{\mathbb{P}[Z_{test} \le \text{Quantile}(D, 1-\alpha)] \ge 1 - \alpha}, \quad \text{ where } ~ \text{Quantile}(D, 1-\alpha) :=  z_{( \lceil (N+1)(1-\alpha) \rceil )}. 
\end{align}
In other words, the (normalized) $1-\alpha$ quantile, which is the order statistic $z_{( \lceil (N+1)(1-\alpha) \rceil )}$ denoted by $\text{Quantile}(D, 1-\alpha)$, serves as a high-confidence upper bound for any new exchangeable sample $Z_{test}$.

\paragraph{Time-Series Conformal Prediction.} 
Our closed-loop verification of system~\eqref{eq:system_model} extends over a specified time horizon and therefore requires time-series probabilistic noise bounds. However, standard scalar CP is not directly applicable to time series because observations at different time steps are temporally dependent rather than independent~\cite{2023CPtime, cleaveland2024conformal}.
A common workaround is to apply CP independently at each time step and then use a union bound to combine the results, ensuring overall coverage across the entire time horizon~\cite{oliveira2024split}.
This approach requires setting the per-step confidence level to $1- \alpha/T$ in order to achieve an overall guarantee $\alpha$~\cite{angelopoulos2023conformal}.
However, as $T$ increases, this conservatism compounds, causing the resulting bounds to become too conservative to be practical for planning or safety-critical applications.

Therefore, recent time-series CP methods focus on modeling entire trajectories rather than individual time steps, in order to account for temporal dependence and move beyond the exchangeability assumption. Given a dataset $D = \{z_{1,0:T}, \dots, z_{N,0:T}\}$ of $N$ trajectories, each defined over a fixed time horizon $T$, the goal is to design a trajectory-wide, possibly time-based, bound $b_k$ such that ${\mathbb{P}[\forall k = 0..T: Z_{test,k} \le b_k]} \ge 1-\alpha$,
%
meaning that the entire test trajectory $Z_{test}$ remains within the time-series bound $b_k$ with high confidence. 
To improve upon the naive union-bound-based solution described above, researchers~\cite{cleaveland2024conformal,angelopoulos2023conformal_pid} have proposed to re-weigh the bounds at each time step to tighten up the bounds.
For instance, Cleaveland et al.~\cite{cleaveland2024conformal} introduce a parameterized non-conformity score optimized through linear complementarity programming to construct valid prediction regions over entire trajectories.
In contrast, our work focuses on the state space partitioning that adaptively allocates the conformal bounds, thereby achieving tighter and less conservative reachability guarantees compared to other approaches.

\paragraph{Probabilistic Verification.}
Our work extends \textit{non-probabilistic} reachability analysis mentioned above by explicitly accounting for uncertainty in perceptual disturbances with \textit{statistical} methods. 
Alternative approaches, such as probabilistic model checking with PRISM~\cite{PRISM}, estimate the probability that a system trajectory reaches an unsafe state within a finite horizon. 
These methods provide rigorous guarantees but require explicit low-dimensional probabilistic models, making them difficult to apply to our high-dimensional, data-driven system in~\eqref{eq:system_model}~\cite{mitra_formal_2025}. This challenge has led to parallel ideas of abstracting perceptual disturbance with confidence intervals~\cite{pasareanu_closed-loop_2023,calinescu_controller_2024} and constricting Interval Markov Decision Process (IMDP) models for safety verification~\cite{cleaveland_conservative_2025,peper_towards_2025}. Our research elaborates these ideas to reduce conservatism in non-probabilistic reachability, which tightly captures the behavior of low-dimensional neural controllers.   

\paragraph{Model-Free Reachability.}
Recently, a variety of data-driven reachability techniques have emerged without the need for first-principles models. For instance, several works~\cite{Data-Driven_RA_USC, zhou2024conformalized} estimate reachable sets directly from sampled trajectories or empirical distributions of disturbances. 
These approaches bypass the need for models by deriving statistical bounds from observed data, allowing reachability analysis to be applied to complex, black-box, or learned dynamics~\cite{CLEAVELAND2022103782, GeneralRA}.
Compared to these methods, we derive state-dependent, data-driven bounds and integrate them into model-based reachability analysis. Our approach benefits from the knowledge of the dynamical model, producing tighter reachable sets than existing state-of-the-art techniques.

\section{Problem Formulation}\label{sec:prob}

Given the system in~\eqref{eq:system_model}, the unknown perception component $p$ (e.g., a camera) plays a major role in the control process.
However, directly reasoning about its complex, high-dimensional structure is challenging for safety verification.
To make the verification more tractable, we abstract the effect of $p$ on the control loop rather than modeling $p$ explicitly.
Specifically, we represent the output of $p$ through the desired lower-dimensional state estimates $g(x_k)$ and an additive noise term $v_k$.
This abstraction preserves the essential influence of $p$ on the system while enabling scalable safety verification.
The resulting abstracted system is defined below: 
\begin{align}
\label{eq:abst_system_model}
\begin{split}
    x_{k+1} = f(x_k, u_k); \quad y_k = g(x_k) + v_k; \quad  u_k = h(y_k),
\end{split}
\end{align}
Given the abstracted system in~\eqref{eq:abst_system_model}, our \textbf{main goal} is to provide high-confidence safety guarantees through statistical-symbolic reachability analysis. 
Specifically, we aim to construct a sequence of reachable sets 
$\mathcal{X}_1, \dots, \mathcal{X}_T$ that contain the system states at all time steps with confidence at least $1 - \alpha$. 

To support this analysis,
we assume access to a dataset of $N$ trajectories, $D = \{(x_{1,0:T}, y_{1,0:T}), \dots, (x_{N,0:T}, y_{N,0:T})\}$, sampled from the system in~\eqref{eq:abst_system_model}.
Each trajectory $i$ starts from an initial state $x_{i,0}$ drawn independently and identically distributed (IID) according to an unknown distribution $\mathcal{D}_0$ over initial set $\mathcal{X}_0$, and evolves for $T$ steps\footnote{While our method does not require different trajectories to have the same time horizon, we fix the horizon for simplicity of presentation.} under a perception noise sequence $\mathbf{v}_i = [v_{i,0},\dots,v_{i,T}]$ drawn from an unknown joint distribution $\mathcal{V}$. This process produces the dataset
where \( x_{i,0:T} \) denotes the true system states along the \( i \)-th trajectory from time step \( 0 \) to \( T \), and 
\( y_{i,0:T} \) denotes the corresponding neural component outputs.

\begin{problem}[High-Confidence Reachability]
\label{prob:reachability}
Given the abstracted system in~\eqref{eq:abst_system_model}, 
a confidence level $\alpha$, and a trajectory dataset $D$, 
the goal is to \emph{construct a sequence of reachable sets} 
$\mathcal{X}_1, \dots, \mathcal{X}_T$ such that
\[
\mathbb{P}_{x_0 \sim \mathcal{D}_0,\, \mathbf{v} \sim \mathcal{V}}
\Big[\forall k \in \{0,\dots,T\}: x_k \in \mathcal{X}_k \Big] \ge 1 - \alpha.
\]

\end{problem}

As argued in the previous section, solving Problem~\ref{prob:reachability} requires statistical bounds for perception noise $\mathbf{v}$, acquired for instance from CP~\cite{lindemann2024formal,angelopoulos2023conformal}.
However, the current scalar or time-based CP would yield overly conservative bounds~\cite{cleaveland2024conformal}, leading to high and unnecessary inflation in reachable sets.
To address this, we aim to exploit heteroskedasticity in the perception error over the \textbf{state space}. This idea leads to the task of partitioning the state space into low- and high-error regions, to minimize the accumulation of approximation error during reachability analysis.

\begin{problem}[State Partitioning for Conformal Noise Bounding]
\label{prob:state_dependent_cp}
Given the same setting as in Problem~\ref{prob:reachability}, the goal is to partition the state space $\mathcal{X}$ into $M$ disjoint regions,
${\mathcal{X} = \mathcal{S}_1 \cupdot  \cdots \cupdot  \mathcal{S}_M},$
and to assign a confidence level $\alpha_i$ to each region $\mathcal{S}_i$ such that $\sum_{i=1}^M \alpha_i = \alpha.$
Then, for each region, the goal is \emph{to compute a region-based noise bound} $\eta_i$ such that
\[
\mathbb{P}_{x_0 \sim \mathcal{D}_0,\, \mathbf{v} \sim \mathcal{V}}
\Big[\forall k \in \{0,\dots,T\}:\;
x_k \in \mathcal{S}_i \Rightarrow \|v_k\| \le \eta_i \Big] 
\ge 1 - \alpha_i,
\]
thus ensuring the overall reachability confidence of $1-\alpha$  (shown in Section~\ref{sec:approach} with a union bound).
The partitioning should minimize a loss function $\mathcal{L}\big(D, \{\mathcal{S}_i\}_{i=1}^M,\{\alpha_i\}_{i=1}^M\big)$ that correlates with the tightness of the reachable sets $\mathcal{X}_k$ in Problem~\ref{prob:reachability}.
\end{problem}
\noindent
\textit{Remark}:
Problem~\ref{prob:state_dependent_cp} has two parts: 
1) identifying a suitable loss function $\mathcal{L}$ for optimizing the partition; 
2) solving the resulting optimization to obtain state-based bounds. 
These bounds are then integrated into the reachability analysis, resulting in a solution of Problem~\ref{prob:reachability} with \emph{tighter}, \emph{less conservative} guarantees.

\section{Approach Overview}
\label{sec:approach}


\begin{figure}[t]
\captionsetup{justification=centering}
    \centering
    \includegraphics[width=\textwidth]{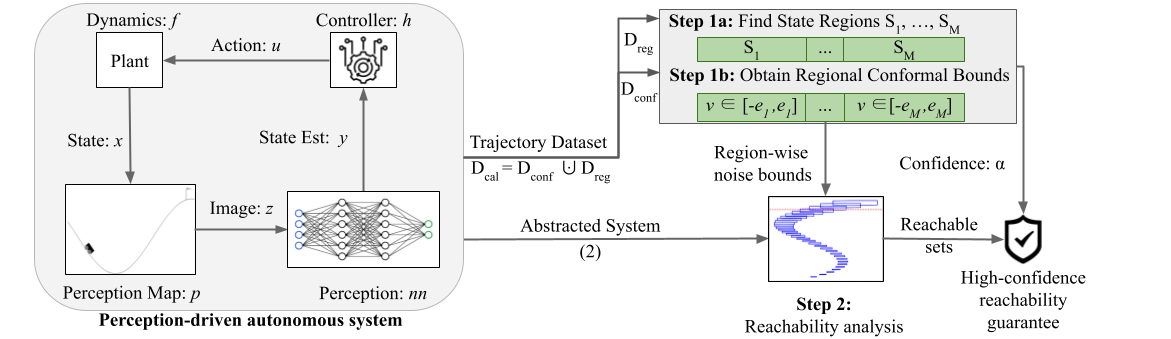}
    \caption{High-confidence reachability approach overview. 
    \textbf{Step 1 (Partition region):}
    Using perception errors along trajectories, we partition the state space to minimize the trajectory-based perception bounds and compute region-specific conformal bounds.
    \textbf{Step 2 (Reachability analysis):} We inflate the reachability analysis with these bounds to obtain high-confidence reachable sets.}
    \label{fig:approach-overview}
\end{figure}

We propose a \textbf{two-step approach} to the main Problem~\ref{prob:reachability}, as illustrated in Figure~\ref{fig:approach-overview}.
The key idea is first to estimate region-based perception noise bounds (Problem~\ref{prob:state_dependent_cp}) and then incorporate these bounds into reachability analysis to solve Problem~\ref{prob:reachability}. 
\textbf{Step 1} optimizes the state-space partition with a designed loss function 
and applies CP to provide a high-confidence noise bound for each region.
\textbf{Step 2} then integrates these region-based bounds into the worst-case reachability analysis, producing high-confidence reachable sets. 
By decoupling uncertainty modeling from reachability computation, this two-step approach achieves statistical validity while maintaining the tight reachable set approximations.

\vspace{5px}
\noindent
\textbf{Step 1 (Region-based CP):}
Since perception error bounds vary by regions, the first step is to partition the whole state space into $M$ subregions $\mathcal{X} = \mathcal{S}_1 \cupdot \dots \cupdot \mathcal{S}_M$, and assign each region $\mathcal{S}_i$ its own confidence level $1-\alpha_i$. 
The per-region confidence is chosen such that $\sum_{i=1}^{M}\alpha_i = \alpha$, where $\alpha$ is the desired global confidence level selected by the user.
This allows us to construct \textit{state-based perception noise bounds}.
Specifically, for each $\mathcal{S}_i$, we calculate a bound $\eta_i$ such that, with probability at least $1-\alpha_i$,
the perception error remains in the bounds whenever the state is in $\mathcal{S}_i$ over the horizon:
\begin{align}
\label{eq:one_region_confidence}
\mathbb{P}_{x_0\sim \mathcal{D}_0,\; \mathbf{v} \sim \mathcal{V}}
\Big[
\forall k\in\{0,\dots,T\} : x_k\in\mathcal{S}_i \implies \ 
\|v_k\| \le \eta_i
\Big] \;\ge\; 1-\alpha_i .
\end{align}

Next, by applying the \textit{union bound} (Boole’s inequality) across all subregions, we can merge the region-based guarantees into a single trajectory-wide global guarantee. 
This ensures a \textbf{global confidence level} of $1 - \alpha$ while still maintaining tight bounds within each region. To conveniently represent our bounds across state regions, we define a global \textbf{piecewise noise-bound function} $ H: \mathcal{X} \to \mathbb{R}_{\ge 0}:$
\begin{align}
    \label{eq:global_bound}
    H(x) := \eta_i, \text{ if } x \in \mathcal{S}_i.
\end{align}

Unlike a \textit{constant} global worst-case noise bound, this \textit{piecewise-constant} global bound $H(x)$ reflects the local levels of perception noise, producing tighter overall reachable sets. This reasoning is formalized in the following proposition.

\begin{proposition}[Global Partitioning Guarantee]
\label{prop:sd_conformal_prediction}
Consider the abstracted system in~\eqref{eq:abst_system_model} and a state-space partitioning $\mathcal{X} = \mathcal{S}_1 \cupdot \cdots \cupdot \mathcal{S}_M$, with $\eta_i$ as the region-based noise bound with confidence $\alpha_i$ for each $\mathcal{S}_i$, per~\eqref{eq:one_region_confidence}, and $H(x)$ as the global noise bound, per~\eqref{eq:global_bound}.  
If the regional confidences satisfy $\sum_{i=1}^M \alpha_i = \alpha$, the following \emph{trajectory-wide noise bound} holds:
\[
\mathbb{P}_{x_0\sim \mathcal{D}_0,\, \mathbf{v}\sim \mathcal{V}}
\Big[ \forall k \in \{0,\dots,T\}:\;
\|v_k\| \le H(x_k) \Big] \ge 1- \alpha.
\]
\end{proposition}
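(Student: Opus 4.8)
The plan is to reduce the trajectory-wide guarantee to a union-bound (Boole's inequality) argument over the $M$ per-region events. First I would name, for each region $i$, the favorable event
\[
A_i := \{\forall k \in \{0,\dots,T\}:\; x_k \in \mathcal{S}_i \implies \|v_k\| \le \eta_i\},
\]
so that the hypothesis in Equation~\ref{eq:one_region_confidence} reads $\mathbb{P}[A_i] \ge 1 - \alpha_i$, equivalently $\mathbb{P}[A_i^c] \le \alpha_i$. I would also name the target event $A := \{\forall k \in \{0,\dots,T\}:\; \|v_k\| \le H(x_k)\}$.

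The crux of the argument is the set identity $A = \bigcap_{i=1}^M A_i$. To establish it I would exploit that $\{\mathcal{S}_i\}$ is a partition: for every $k$, the state $x_k$ lies in exactly one region, say $\mathcal{S}_{j(k)}$, and by the definition of $H$ in Equation~\ref{eq:global_bound} we have $H(x_k) = \eta_{j(k)}$. On $\bigcap_i A_i$, instantiating the $i = j(k)$ implication (whose premise $x_k \in \mathcal{S}_{j(k)}$ holds) gives $\|v_k\| \le \eta_{j(k)} = H(x_k)$ for every $k$, so $A$ holds. Conversely, on $A$ the bound $\|v_k\| \le H(x_k) = \eta_{j(k)}$ holds, and for any region $i \ne j(k)$ the implication defining $A_i$ is vacuously true since $x_k \notin \mathcal{S}_i$; hence every $A_i$ holds. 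Both inclusions together give the identity.

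With the identity in hand the remainder is routine: taking complements, $A^c = \bigcup_{i=1}^M A_i^c$, and Boole's inequality yields
\[
\mathbb{P}[A^c] \le \sum_{i=1}^M \mathbb{P}[A_i^c] \le \sum_{i=1}^M \alpha_i = \alpha,
\]
where the last equality is the hypothesis $\sum_i \alpha_i = \alpha$. Rearranging gives $\mathbb{P}[A] \ge 1 - \alpha$, which is exactly the claimed trajectory-wide bound.

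I expect the only genuinely delicate step to be the set identity $A = \bigcap_i A_i$, and specifically the handling of the vacuous implications. The \emph{disjointness} of the partition is what makes each timestep constrained by precisely one region, while the \emph{covering} property is what guarantees $H(x_k)$ is always defined. If the regions merely covered $\mathcal{X}$ without being disjoint, a single $x_k$ could trigger several nontrivial implications and $H$ would be ambiguous, so I would flag the partition assumption as the load-bearing hypothesis. Everything else — the union bound and the confidence budget $\sum_i \alpha_i = \alpha$ — is mechanical.
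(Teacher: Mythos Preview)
Your proof is correct and follows essentially the same union-bound approach as the paper: the paper defines the violation events $E_i = A_i^c$, asserts $\bigcup_i E_i = A^c$, and applies Boole's inequality directly. Your version is in fact slightly more careful, since you explicitly verify both inclusions in the set identity $A = \bigcap_i A_i$ using the partition structure, whereas the paper simply states the equality without justification.
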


\begin{proof}
For each region $\mathcal{S}_i$, define the violation event
$
E_i := \{\exists\, k \in \{0,\dots,T\}:\; x_k \in \mathcal{S}_i \ \wedge\ \|v_k\| > \eta_i \}.
$
From the proposition statement, we know that $\mathbb{P}(E_i) \le \alpha_i$ for all $i=1,\dots,M$. Let
$
A := \cup_{i=1}^M E_i
= \{\exists\, k:\; \|v_k\| > H(x_k)\}.
$
Then, by the union bound,
$
\mathbb{P}(A) \le \sum_{i=1}^M \mathbb{P}(E_i) \le \sum_{i=1}^M \alpha_i = \alpha.
$
Taking complements yields the desired lower bound on the satisfaction chance:
$
\mathbb{P}[\forall\, k:\; \|v_k\| \le H(x_k)]
= 1 - \mathbb{P}(A) \;\ge\; 1-\alpha.$
\end{proof}

\noindent
\textbf{Step 2 (High-Confidence Reachability):} 
Next, we perform worst-case (deterministic) reachability analysis of the abstracted system in~\eqref{eq:abst_system_model} 
from initial set $\mathcal{X}_0$.
The region-based perception error bounds $H(x)$ from Step~1 are incorporated as noise bounds in the reachability, 
so that the resulting reachable sets are valid with high probability. 
Therefore, Step 1 ensures that the statistical guarantee on perception errors leads to valid worst-case reachable sets in Step 2, thereby solving Problem~\ref{prob:reachability}. 
The following Theorem~\ref{thm:reachability} provides the overall safety guarantee.

\begin{theorem}[High-Confidence Reachability Guarantee]
\label{thm:reachability}
\looseness=-1
Consider the abstracted system in~\eqref{eq:abst_system_model},
with initial states $x_0$ sampled from a distribution $\mathcal{D}_0$ 
supported on an initial set $\mathcal{X}_0$. Suppose there exists a global confidence bound $H(x)$ s.t.:

\[
\mathbb{P}_{x_0 \sim \mathcal{D}_0,\, \mathbf{v} \sim \mathcal{V}}
\Big[\forall k \in \{0,\dots,T\}: \|v_k\| \leq H(x_k) \Big] \geq 1-\alpha.
\]
Let $\mathcal{X}_1, \dots, \mathcal{X}_T$ be the worst-case reachable sets 
of the system in~\eqref{eq:abst_system_model},
computed from $\mathcal{X}_0$ under disturbance bounds $\|v_k\| \le H(x_k)$ for all $k=1..T$.
Then, with probability at least $1 - \alpha$ over the joint distribution 
$(x_0, \mathbf{v}) \sim (\mathcal{D}_0, \mathcal{V})$, the true trajectory $x_k$ remains within the corresponding reachable sets: $\mathbb{P}_{x_0 \sim \mathcal{D}_0,\, \mathbf{v} \sim \mathcal{V}}
[\forall k \in \{0,\dots,T\}: x_k \in \mathcal{X}_k ] \ge 1 - \alpha.$
\end{theorem}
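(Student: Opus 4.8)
The plan is to reduce the probabilistic claim to a purely deterministic soundness property of the worst-case reachability analysis, and then transfer probability from the disturbance-bound hypothesis by monotonicity. The guiding observation is that all randomness enters only through the pair $(x_0,\mathbf{v})$, whereas the sets $\mathcal{X}_1,\dots,\mathcal{X}_T$ are fixed (non-random) objects once they have been computed from $\mathcal{X}_0$ and the bound function $H$. So I never need to reason about a random reachable set; I only need to know which outcomes $(x_0,\mathbf{v})$ are guaranteed to land the true trajectory inside those fixed sets.

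First I would isolate two events. Let $G := \{\forall k\in\{0,\dots,T\}:\ \|v_k\| \le H(x_k)\}$ be the bounded-noise event, and let $C := \{\forall k\in\{0,\dots,T\}:\ x_k \in \mathcal{X}_k\}$ be the containment event we wish to lower-bound. By hypothesis $\mathbb{P}(G) \ge 1-\alpha$, and since $\mathcal{D}_0$ is supported on $\mathcal{X}_0$ we have $\mathbb{P}(x_0 \in \mathcal{X}_0)=1$. Second, I would invoke the soundness of worst-case reachability: because each $\mathcal{X}_k$ over-approximates the forward image of $\mathcal{X}_0$ under $f$, $h$, and \emph{every} disturbance sequence obeying $\|v_k\|\le H(x_k)$, any concrete trajectory that starts in $\mathcal{X}_0$ and realizes noise within those bounds must satisfy $x_k\in\mathcal{X}_k$ at each step. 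Outcome by outcome, this yields the deterministic inclusion $\{x_0\in\mathcal{X}_0\}\cap G \subseteq C$. Finally I would chain these facts, using $\mathbb{P}(x_0\notin\mathcal{X}_0)=0$:
\[
\mathbb{P}(C) \;\ge\; \mathbb{P}\big(\{x_0\in\mathcal{X}_0\}\cap G\big) \;=\; \mathbb{P}(G) \;\ge\; 1-\alpha .
\]

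The main obstacle is justifying the deterministic inclusion $\{x_0\in\mathcal{X}_0\}\cap G \subseteq C$ precisely because $H$ is \emph{state-dependent}. The admissible noise level $\|v_k\|\le H(x_k)$ is tied to the true random state $x_k$, which is itself known only to lie somewhere in $\mathcal{X}_k$; hence the reachability procedure cannot key off a single value of $H$ but must be sound against the entire piecewise-constant bound at once, i.e., it must account for every region $\mathcal{S}_i$ that the reachable set intersects and propagate all of them. Establishing that the reachability engine indeed produces such a region-robust over-approximation, rather than one anchored to a single unknown $x_k$, is the crux, and it is exactly what motivates the multi-mode branching (and the later branch-merging) machinery. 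Once that soundness guarantee is granted, the probabilistic bookkeeping above is routine, and I expect the remainder of the argument to be essentially the complement-and-union-bound reasoning already used in Proposition~\ref{prop:sd_conformal_prediction}.
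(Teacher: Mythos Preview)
Your proposal is correct and is essentially the same argument as the paper's: the paper phrases it contrapositively by defining the failure event $A=\{\exists k: x_k\notin\mathcal{X}_k\}$, observing via worst-case soundness that $A$ forces some $\|v_l\|>H(x_l)$, and concluding $\mathbb{P}(A)\le\alpha$, which is exactly your inclusion $\{x_0\in\mathcal{X}_0\}\cap G\subseteq C$ taken on complements. Your explicit handling of $x_0\in\mathcal{X}_0$ and your discussion of why state-dependence of $H$ is the nontrivial ingredient are, if anything, more careful than the paper's brief treatment.
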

%
\begin{proof}
Consider the event $A = \{\exists k = 0..T: x_k \notin \mathcal{X}_k\}$. Since the $\mathcal{X}_i$ are worst-case reachable sets, it must be the case that $\|v_l\| > H(x_l)$ for some $l \le k$, i.e., 
$A = \{\exists k = 0..T: (x_k \notin \mathcal{X}_k \wedge \exists l \le k: \|v_l\| > H(x_l))\}.$
However, we know that the noise bounds hold with probability $1-\alpha$ over the entire trajectory, so $\mathbb{P}_{x_0\sim \mathcal{D}_0, \mathbf{v} \sim \mathcal{V}}[A] \le \alpha$, and the result follows.
\end{proof}

\section{Region-based Perception Error Bounds}
\label{sec:state_regions} 

This section presents an optimization-based approach to find a partitioning $\mathcal{X} = \mathcal{S}_1 \cupdot \dots \cupdot \mathcal{S}_M$ that 
(i)~satisfies the probability guarantee in Proposition~\ref{prop:sd_conformal_prediction} and
(ii)~reduces the approximation error incurred by reachability analysis. 

\subsection{Regional Conformal Bounds for Trajectory-Wide Guarantees}
We first introduce the trajectory dataset collected from the system~\eqref{eq:system_model}, 
following the typical setup where each trajectory captures the true states and 
perception outputs under the closed-loop policy: 
$D =\{(x_{1,0:T}, y_{1,0:T}), \dots, (x_{N,0:T}, y_{N,0:T})\}.$ 
Each trajectory is generated under controller $h$ to ensure the on-policy IID setting. Initial states $x_0$ are sampled from an unknown distribution $\mathcal{D}_0$ supported on $\mathcal{X}_0$, and the sequence of perception noise $\mathbf{v} = [v_0, \dots, v_T]$ is drawn from a joint distribution $\mathcal{V}$, where the noise values within a trajectory are possibly correlated. 
Since we expect perception errors to correlate more strongly with state regions than with time, we derive \textit{region-based dataset} by extracting all state–output pairs from 
$D$ whose states fall within a given region $\mathcal{S}_i$:
\begin{align}
D_{\mathcal{S}_i} = \big\{(x_{j,t}, y_{j,t}) \mid x_{j,t} \in \mathcal{S}_i \big\}.    
\end{align}
Here, $j$ is the index of the trajectory and $t$ is the index of the time step. 
This aggregation collects all state-measurement pairs that fall within the subregion $\mathcal{S}_i$, preserving their time indices $t$ for later use in the loss function.

For each subregion $\mathcal{S}_i$, based on the dataset $D_{\mathcal{S}_i}$, we next calculate the nonconformity scores for conformal prediction.
For every trajectory $j$ that visits this region, we consider the \textit{sub-trajectory} (possibly non-contiguous) lying within $\mathcal{S}_i$.
The nonconformity score is defined as the maximum perception error observed along its sub-trajectory:
\begin{align}
\delta^j_{\mathcal{S}_i} = \max_{(x_{j,t},y_{j,t}) \in D_{\mathcal{S}_i}} \| g(x_{j,t}) - y_{j,t}\|.
\end{align}
Repeating this for all relevant trajectories $j=1..N$ results in the set of nonconformity scores $\Delta_i = \{\delta^1_{\mathcal{S}_i}, \dots, \delta^{N+1}_{\mathcal{S}_i}\}$, where the additional $\delta_{\mathcal{S}_i}^{N+1}$ is set to $+\infty$ to ensure valid conformal coverage.
Finally, we apply scalar conformal prediction to $\Delta_i$ to obtain the \textit{region-specific perception error bound} $\eta_i$:
\begin{align}
\eta_i = \text{Quantile}(\Delta_i, 1 - \alpha_i).
\end{align}

\subsection{Optimization of Region Partitioning and Confidence Allocation}
Next, we partition the state space to minimize the overall conformal bound across regions.
Since these bounds directly inflate the reachable sets, careful partitioning would lead to tighter reachable set approximations in the subsequent reachability analysis.
To correlate with reachability tightness, our objective reflects:
\begin{enumerate}
    \item \textit{Region visitation frequency}: Regions visited more often contribute more to the loss since they dominate the reachability analysis time and overapproximation error.
    \item \textit{Early timesteps}: The early steps matter more because approximation error snowballs over time.
    \item \textit{Region-specific confidence levels}:
    We allocate lower confidence to high-error regions to tighten their bounds and higher confidence to low-error regions, resulting in little bounds inflation. 
    By redistributing confidence while keeping $\sum_i \alpha_i = \alpha$, we achieve tighter overall bounds without losing coverage.
    
\end{enumerate}

Putting these heuristics together, we define the following loss to optimize the partition and its per-region bounds:
\begin{equation}\label{eq:loss}
    \mathcal{L}\big(D, \{\mathcal{S}_i\}_{i=1}^M,\{\alpha_i\}_{i=1}^M\big) = \sum_{i=1}^M\sum_{x_{j,t} \in D_{\mathcal{S}_i}} \gamma^t w_i \eta_i,
\end{equation}
where the weights are $w_i = |D_{\mathcal{S}_i}|$.
This objective combines three quantities that tighten the inflation of reachability:
\begin{itemize}
    \item \textbf{Region visitation weight ($w_i$).}  Each region $\mathcal{S}_i$ is assigned a weight $w_i = |D_{\mathcal{S}_i}|$. This weight counts how many data points fall in this region. Regions with more points contribute more to the objective. This prioritizes areas that our system visits frequently or spends more time in.
    
    \item \textbf{Time decay weight ($\gamma^t$).} Every sample at time $t$ is scaled by a factor $\gamma^t$. The factor $\gamma^t$ decreases exponentially with $t$ (e.g., $\gamma^t=0.9^t$), so early timesteps receive higher importance. This reduces the approximation errors accumulated in reachability analysis over time.

    \item \textbf{Per-region confidence ($\alpha_i$).} Each region $\mathcal{S}_i$ is assigned its own confidence level $1-\alpha_i$. 
    The conformal bound $\eta_i$ is then calculated based on this confidence. 
    This design allows the optimizer to assign tighter bounds to low-error regions and looser bounds to high-error ones.
    Such adaptive allocation helps better distinguish heterogeneous uncertainty across the state space and yields overall tighter conformal bounds under the global confidence constraint.

\end{itemize}

The parameters of the optimization include both the \emph{partitions} $\{\mathcal{S}_i\}_{i=1}^M$ and the \emph{per-region confidence levels} $\{\alpha_i\}_{i=1}^M$. 
By jointly tuning these two sets of variables, the loss $\mathcal{L}$ is minimized while still respecting a global confidence constraint (i.e., $\sum_{i=1}^M \alpha_i \leq \alpha$). 
We are now ready to state the region-based optimization problem considered in this paper.

\begin{definition}[Reachability-Informed Region Optimization]
Given a calibration dataset of trajectories $D$ and a total confidence bound $\alpha$, \emph{the reachability-informed region optimization problem} is to select $M$ regions that: 
\begin{align}\label{eq:optimziation_cp_region}
    \begin{split}
    \min_{\{\mathcal{S}_i\}_{i=1}^M,\ \{\alpha_i\}_{i=1}^M}\quad 
    & \mathcal{L}\big(D, \{\mathcal{S}_i\}_{i=1}^M,\{\alpha_i\}_{i=1}^M\big) \\
    \text{s.t.} &\ \mathcal{X} = \mathcal{S}_1 \cupdot \dots \cupdot \mathcal{S}_M, \quad \alpha_i \in (0, \alpha),\quad \sum_{i=1}^M \alpha_i \le \alpha.\\
    &\ \Delta_i = \{\delta^1_{\mathcal{S}_i}, \dots, \delta^{N+1}_{\mathcal{S}_i}\}, i = 1, \dots, M, \text{ and }\\
    &\ \ \eta_i = \text{Quantile}\left(\Delta_i, 1-\alpha_i \right), i = 1, \dots, M.\\
    \end{split}
\end{align}    
\end{definition}

\subsection{Solving the Region Optimization via Genetic Algorithm}

\begin{wrapfigure}{H}{.33\linewidth}
   \centering
   \vspace{-10px}
     \includegraphics[trim={0mm 0mm 0mm 0mm},clip,width=\linewidth]{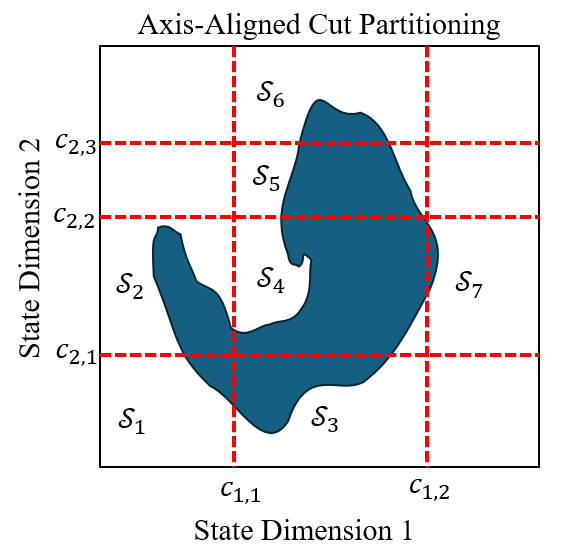}
     \caption{Example of axis-aligned cuts for 7 subregions: 2 cuts in the first state dimension and 3 cuts in the second state dimension.}
    \label{fig:state_partitioning}
 \end{wrapfigure}

The optimization problem in~\eqref{eq:optimziation_cp_region} becomes intractable if the regions $\mathcal{S}_i$ are allowed to have arbitrary shapes.
This greatly complicates the optimization and risks severe overfitting. To make the problem tractable, we restrict each partitioned region $\mathcal{S}_i$ to be an \textit{axis-aligned box}.
Such box-shaped regions are easy to parameterize, optimize, and integrate with subsequent reachability analysis, while still providing tight confidence bounds.

As an initial step, we propose to solve~\eqref{eq:optimziation_cp_region} by choosing sets of axis-aligned cuts $\mathcal{C}$ in each of $d$ state dimensions by the number of cuts $n_1,\dots,n_d$, possibly different in each dimension. Formally, we define the sets of cuts per dimension: $\{\mathcal{C}_u\}_{u=1}^{d}$ where for each dimension $u$, $\mathcal{C}_u=\{c_{u,1},\dots,c_{u,n_u}\}$ gives the individual cut locations in that dimension. Figure~\ref{fig:state_partitioning} illustrates example cut sets in 2 dimensions.

These cut sets then define disjoint, non-empty regions $\{\mathcal{S}_i\}_{i=1}^M$. The number of regions, $M$, may vary given the same number of cuts in each dimension as some of the partitioned boxes may not contain trajectory data from $D$. In Figure~\ref{fig:state_partitioning}, the top left and top right regions aree empty. To accurately determine the number of nonempty regions based on changing cut locations, we recompute the feasible regions $\{\mathcal{S}_i\}_{i=1}^M$ at each step of the optimization based on the current cuts $\{\mathcal{C}_u\}_{u=1}^{d}$ and the dataset $D$. Next, we will discuss how to handle empty regions during the optimization process. Altogether, we solve the following optimization problem.

\begin{definition}[Axis-Aligned Reachability-Informed Region Optimization]
Given a calibration dataset of trajectories $D$, the number of desired cuts in each dimension $n_1,\dots,n_d$, and a total confidence bound $\alpha$, \emph{the axis-aligned reachability-informed region optimization problem} is to select axis-aligned cuts per state dimension that:

\begin{align}\label{eq:optimziation_cuts}
    \begin{split}
    \min_{\{\mathcal{C}_u\}_{u=1}^d,\ \{\alpha_i\}_{i=1}^M}\quad 
    & \mathcal{L}\big(D, \{\mathcal{S}_i\}_{i=1}^M,\{\alpha_i\}_{i=1}^M\big) \\
    \text{s.t.} &\ \{\mathcal{S}_i\}_{i=1}^M =\textit{getNonemtyBoxes}(\{\mathcal{C}_u\}_{u=1}^d, D)\\
    &\ \mathcal{X} = \mathcal{S}_1 \cupdot \dots \cupdot \mathcal{S}_M, \quad \alpha_i \in (0, \alpha),\quad \sum_{i=1}^M \alpha_i \le \alpha\\
    &\ \Delta_i = \{\delta^1_{\mathcal{S}_i}, \dots, \delta^{N+1}_{\mathcal{S}_i}\}, i = 1, \dots, M, \text{ and }\\
    &\ \ \eta_i = \text{Quantile}\left(\Delta_i, 1-\alpha_i \right) \text{ if } x \in \mathcal{S}_i, i = 1, \dots, M.\\
    \end{split}
\end{align}    
\end{definition}

During optimization, we utilize a subroutine $\textit{getNonemptyBoxes}()$, which only returns non-empty regions with respect to dataset $D$ given all of the possible regions formed by cut sets $\{\mathcal{C}_u\}_{u=1}^d$. As illustrated in Figure~\ref{fig:state_partitioning}, there will often be empty regions unless the trajectory data is uniformly distributed throughout the space. Therefore, $M$ is the number of regions that share the total confidence $\alpha$. During optimization, the empty regions are simply ignored. However, at verification time, we want our regions to produce a tiling of the state space so that all states reached have a corresponding conformal bound. An initial approach to create a tiling is to iteratively merge empty with the adjacent, non-empty region with lowest conformal bounds until only $M$ hyper-rectangles tile the state space.

To find optimal partitions, we use a gradient-free global method, a \textit{genetic algorithm} (GA)~\cite{mirjalili2019}.
A GA is appropriate here because the objective is non-differentiable and highly nonconvex due to discrete region assignments.  In general, the approach models solutions to the optimization problem as individuals, each with their own ``genes'' (parameters) which define their evolutionary ``fitness'' -- the lower the loss, the more fit. To compute a solution, many randomized individuals are simulated simultaneously: fit candidates ``crossover'' with other fit candidates produce new individuals, encouraging convergence of good solutions, while, at the same time, individual genes can ``mutate'', encouraging random exploration of the parameter space. 

\begin{algorithm}[t]
\small
\caption{Genetic Algorithm for Region Partitioning and Confidence Allocation}
\label{alg:ga_short}
\DontPrintSemicolon
\SetKwFunction{Crossover}{Crossover}
\SetKwFunction{Mutate}{Mutate}
\KwIn{Trajectory dataset $D$, with state dimension $d$, number of region partitions (cuts) per dimension~$n_1\,\dots,n_d$, 
global confidence $\alpha$, minimum per-region confidence $\underline{\alpha}$, population size $P$, generations $G$, crossover probabilities $p_c^{\mathcal{C}}, p_c^{\alpha}$, mutation probabilities $p_m^{\mathcal{C}}, p_m^{\alpha}$, discount $\gamma$.}
\KwOut{Best partition cuts per dimension $\{\mathcal{C}_u^\star\}_{u=1}^d$, corresponding $M$ non-empty hyper rectangular regions~$\{\mathcal{S}_i^\star\}_{i=1}^M$, and confidence allocations $\{\alpha_i^\star\}_{i=1}^M$ which minimize $\mathcal{L}$ from (\ref{eq:loss}) subject to $\sum_{i=1}^M \alpha_i=\alpha$, $\alpha_i \ge 0$.}

\BlankLine
\textbf{Initialize:} population $\mathcal{P}=\{\Theta^{(p)}\}_{p=1}^{P}$ with random cuts $\{\mathcal{C}_u\}_{u=1}^d$, compute corresponding random boxes $\{\mathcal{S}_i\}_{i=1}^M$, and assign uniform confidences $\alpha_i=\alpha/M$.\;
\ForEach{$\Theta^{(p)}\in\mathcal{P}$}{Evaluate $\mathcal{L}(\Theta^{(p)})$.}
\For{$g=1$ \KwTo $G$}{
  \textbf{Elitism:} sort $\mathcal{P}$ by increasing $\mathcal{L}$ and copy top $P/2$ individuals to $\mathcal{P}_{\text{new}}$.\;\While{$|\mathcal{P}_{\text{new}}| < P$}{
    \textbf{Randomly Select parents} $\Theta^{(a)},\Theta^{(b)}$ from $\mathcal{P}_{\text{new}}$. \; 
    
    \textbf{Perform Crossover from Procedure~(\ref{alg:crossover})}:  $\tilde{\Theta} \leftarrow$ \Crossover{$\Theta^{(a)},\Theta^{(b)},p_c^{\mathcal{C}}, p_c^{\alpha}, \underline{\alpha}, \alpha,  D$}\;

     \textbf{Perform Mutation from Procedure~(\ref{alg:mutate}): $\tilde{\Theta} \leftarrow$ \Mutate{$\tilde{\Theta}, p_m^{\mathcal{C}}, p_m^{\alpha}, \underline{\alpha},\alpha, D$}} \;
\textbf{Evaluate $\mathcal{L}(\tilde{\Theta})$ and insert into $\mathcal{P}_{\text{new}}$.}\;
  }
  \textbf{Replace} $\mathcal{P}\leftarrow \mathcal{P}_{\text{new}}$.\;
}
\Return{the individual in $\mathcal{P}$ with the smallest $\mathcal{L}$}\;
\end{algorithm}

Algorithm~\ref{alg:ga_short}
outlines our GA implementation at a high level. Each individual $\Theta$ is defined as a tuple of the cut sets, $\{\mathcal{C}_u\}_{u=1}^d$, corresponding regions, $\{\mathcal{S}_i\}_{i=1}^{M}$, and per-region confidences $\{\alpha_i\}_{i=1}^M$. The fitness of each individual is defined as $\mathcal{L}(\Theta) = \mathcal{L}(D, \{\mathcal{S}_i^{\Theta}\}_{i=1}^{M},\{\alpha_i^{\Theta}\}_{i=1}^M)$ where $\mathcal{L}$ is the loss defined in (\ref{eq:loss}), applied to the individual's parameters.  
Procedures~\ref{alg:crossover}~and~\ref{alg:mutate} describe the details of the Crossover and Mutate procedures, respectively. Overall, the parameters for each individual are the boundaries of axis-aligned hyper rectangles or "cuts" as well as the per-region confidences $\alpha_i$. This partitioning and confidence allocation scheme can introduce challenges finding feasible solutions. We give several methods to help find viable candidates. 
First of all, it is possible to find a candidate region partition $\mathcal{S}_i$ for which there is insufficient data present in the region, and the resulting quantile bound $\eta
_i$ for the desired confidence $\alpha_i$ is infinite. While such high-loss solutions are often excluded by selecting the top half of individuals each generation, the repair functions help guide solutions away from this undesirable case. In particular, each time a region partition is randomly altered (when producing a new offspring or mutating), we repair its boundaries to ensure they are feasible and not redundant. Similarly, when randomly perturbing confidences, we encourage a lower bound on confidence, $\underline{\alpha}$, then normalize and scale the confidences to help prevent solutions with infinite loss. Such a lower bound can be computed such that the $1-\underline{\alpha}$ quantile value is not infinite for the number of data points used to solve the optimization problem. 

\begin{figure}[t] 
\begin{minipage}{0.49\textwidth}
\renewcommand{\algorithmcfname}{Procedure}
\setcounter{algocf}{0}
\begin{algorithm}[H]
\small
\caption{Crossover}
\label{alg:crossover}

\DontPrintSemicolon

\KwIn{Parents  $\Theta^{(a)} = (\{\mathcal{C}^{(a)}_u\}_{u=1}^{d},\{\mathcal{S}^{(a)}_i\}_{i=1}^{M^{(a)}}, \{\alpha^{(a)}_i\}_{i=1}^{M^{(a)}}),$ $\Theta^{(b)}=(\{\mathcal{C}^{(b)}_u\}_{u=1}^{d},\{\mathcal{S}^{(b)}_i\}_{i=1}^{M^{(b)}}, \{\alpha^{(b)}_i\}_{i=1}^{M^{(b)}})$; cut and confidence crossover probabilities $p_c^{\mathcal{C}}, p_c^{\alpha}$; minimum per-region confidence $\underline{\alpha}$; global confidence $\alpha$; and trajectory data $D$.}
\KwOut{Child: $\tilde{\Theta}$.}

\SetKwFunction{Crossover}{Crossover}
\SetKwProg{Fn}{Procedure}{:}{}

\Fn{\Crossover{$\Theta^{(a)},\Theta^{(b)},p_c^{\mathcal{C}}, p_c^{\alpha}, \underline{\alpha}, \alpha, D $}}{ 
    \textbf{Crossover Cuts:} \;
    Initialize child cuts to parent $(a)$'s: \hspace{0.5em}$\{\tilde{\mathcal{C}}_u\}_{u=1}^{d} \gets \{\mathcal{C}^{(a)}_u\}_{u=1}^{d}$\;
    \ForEach{$\tilde{c}_{u,i}, ~~c^{(b)}_{u,i} \in \{\tilde{\mathcal{C}}_u\}_{u=1}^{d}, ~~ \{\mathcal{C}^{(b)}_u\}_{u=1}^{d}$}{
    Crossover with parent $(b)$:\;
    \hspace{0.5em}$\tilde{c}_{u,i}\gets c^{(b)}_{u,i}$ w.p. $p_c^{\mathcal{C}}$; otherwise $\tilde{c}_{u,i}$
    }

    Repair cuts to remove duplicates:
    \hspace{0.5em}$\{\tilde{\mathcal{C}}_u\}_{u=1}^d \leftarrow \textit{repairCuts}(\{\tilde{\mathcal{C}}_u\}_{u=1}^d)$ \;
    Compute Child Regions: 
    \hspace{0.5em}$\{\tilde{\mathcal{S}_i}\}_{i=1}^{\tilde{M}} \leftarrow\textit{getNonemptyBoxes}(\{\tilde{\mathcal{C}}_u\}_{u=1}^d, ~D)$\;
    \BlankLine
    \textbf{Crossover Confidences: }\;
    \If{$\tilde{M} = M^{(a)} =M^{(b)}$}{
    Initialize child confidences to parent $(a)$'s: \hspace{0.5em}$\{\tilde{\alpha}_i\}_{i=1}^{\tilde{M}} \gets \{\alpha^{(a)}_i\}_{i=1}^{M^{(a)}}$\;
    \ForEach{$\tilde{\alpha}_{i}, ~~\alpha^{(b)}_i\in \{\tilde{\alpha}_i\}_{i=1}^{\tilde{M}}, ~~ \{\alpha^{(b)}_i\}_{i=1}^{M^{(b)}}$}{
    Crossover with parent $(b)$:\;
    \hspace{0.5em}$\tilde{\alpha}_{i}\gets \alpha^{(b)}_{i}$ w.p. $p_c^{\alpha}$; otherwise $\tilde{\alpha}_{i}$
    }

     Repair conf with $\tilde{\alpha}_i \in [\underline{\alpha}, \alpha]$ and $\sum_i \tilde{\alpha}_i=\alpha$:\;
    \hspace{0.5em}$ \{\tilde{\alpha}_i\}_{i=1}^{\tilde{M}}  \leftarrow \textit{repairConfidences}(\{\tilde{\alpha}_i\}_{i=1}^{\tilde{M}})$ \;}
     \Else{Assign uniform child confidences:\; \hspace{0.5em}$\tilde{\alpha}_i \gets \alpha/\tilde{M} ~~\forall i=1..\tilde{M}$ \;
     }
    $\tilde{\Theta} \leftarrow ( \{\tilde{\mathcal{C}}_u\}_{u=1}^d, \{\tilde{\mathcal{S}_i}\}_{i=1}^{\tilde{M}}, \{\tilde{\alpha}_i\}_{i=1}^{\tilde{M}}) $ \;
    \Return $\tilde{\Theta}$ \;
}
\end{algorithm}

\end{minipage}
\hfill
\begin{minipage}{0.49\textwidth}

\renewcommand{\algorithmcfname}{Procedure}
\begin{algorithm}[H]
\small
\caption{Mutation}

\label{alg:mutate}

\DontPrintSemicolon

\KwIn{Individual \hspace{12mm}$\Theta := ( \{\mathcal{C}_u\}_{u=1}^d, \{\mathcal{S}_i\}_{i=1}^{M}, \{\alpha_i\}_{i=1}^{M})$; cut and confidence mutation probabilities $p_m^{\mathcal{C}}, p_m^{\alpha}$; minimum per-region confidence $\underline{\alpha}$; global confidence $\alpha$; and trajectory data $D$.}
\KwOut{Mutated Individual $\tilde{\Theta}$.}

\SetKwFunction{Mutate}{Mutate}
\SetKwProg{Fn}{Procedure}{:}{}

\Fn{\Mutate{$\Theta, p_m^{\mathcal{C}}, p_m^{\alpha}, \underline{\alpha},\alpha, D $}}{
    \BlankLine 
    \textbf{Mutate Cuts:} \;
    Initialize mutated cuts: \hspace{0.5em} $\{\tilde{\mathcal{C}}_u\}_{u=1}^d \gets \{\mathcal{C}_u\}_{u=1}^d $\;
    \ForEach{$\tilde{c}_{u,i} \in \{\tilde{\mathcal{C}}_u\}_{u=1}^d $}{
    Let $c_m$ be a random cut. \;
    Mutate: \;
    \hspace{0.5em}$\tilde{c}_{u,i}\gets c_m$ w.p. $p_m^{\mathcal{C}}$; otherwise $\tilde{c}_{u,i}$
    }    
    Repair cuts to remove duplicates:  \hspace{0.5em}$\{\tilde{\mathcal{C}}_u\}_{u=1}^d \leftarrow \textit{repairCuts}(\{\tilde{\mathcal{C}}_u\}_{u=1}^d)$ \;
    Compute new regions: \hspace{0.5em} $\{\tilde{\mathcal{S}_i}\}_{i=1}^{\tilde{M}} \leftarrow \textit{getNonemptyBoxes}(\{\tilde{\mathcal{C}}_u\}_{u=1}^d, ~D)$\;
    \BlankLine
    \BlankLine 
    \textbf{Mutate Confidences:} \;
\If{$\tilde{M} = M$}{
    Initialize mutated confidences: \hspace{0.5em} $\{\tilde{\alpha}_i\}_{i=1}^{\tilde{M}} \gets \{\alpha_i\}_{i=1}^M $\;
\ForEach{$\tilde{\alpha}_{i}\in \{\tilde{\alpha}_i\}_{i=1}^{\tilde{M}}$}{
    Let $\alpha_{m} \in[\underline{\alpha},\alpha]$ be a random confidence. \;
    \BlankLine 
    Mutate:\;
    \hspace{0.5em}$\tilde{\alpha}_{i}\gets \alpha_m$ w.p. $p_m^{\alpha}$; otherwise $\tilde{\alpha}_{i}$
    } 
     Repair conf with $\tilde{\alpha}_i \in [\underline{\alpha}, \alpha]$, $\sum_i \tilde{\alpha}_i=\alpha$: \; \hspace{0.5em}
      $ \{\tilde{\alpha}_i\}_{i=1}^{\tilde{M}}  \leftarrow \textit{repairConfidences}(\{\tilde{\alpha}_i\}_{i=1}^{\tilde{M}})$ \;}
     \Else{Assign uniform child confidences:\; \hspace{0.5em}$\tilde{\alpha}_i \gets \alpha/\tilde{M} ~~\forall i=1..\tilde{M}$ \;
     }
    $\tilde{\Theta} \leftarrow ( \{\tilde{\mathcal{C}}_u\}_{u=1}^d, \{\tilde{\mathcal{S}_i}\}_{i=1}^{\tilde{M}}, \{\tilde{\alpha}_i\}_{i=1}^{\tilde{M}}) $ \;
    \Return $\tilde{\Theta}$ \;
}
\end{algorithm}
\end{minipage}
\vspace{-3mm}
\end{figure}

\section{High-Confidence Reachability Analysis}
\label{sec:reachability}
We have now obtained high-confidence bounds encoded by the perception bound function $H$ from~\eqref{eq:global_bound} for the unknown random noise $v_k$ in the system in~\eqref{eq:system_model} in regions $\mathcal{S}_1, \dots, \mathcal{S}_M$. From Theorem~\ref{thm:reachability}, in the case of $L_\infty$ norm bounds on $v_k$, we have that $v_k \in [-\eta_i, \eta_i]$ if $x_k \in S_i$. Thus, the system for which we need to compute reachable sets becomes: 
\begin{align}
\label{eq:abst_verification_model}
\begin{split}
    x_{k+1} = f(x_k, u_k); \quad 
    y_k = g(x_k) + [-\eta_i,\eta_i] ~~ \text{ if } x_k \in S_i; \quad
    u_k = h(y_k).
\end{split}
\end{align}

To compute high-confidence reachable sets for the system in~\eqref{eq:abst_verification_model}, we can use any reachability tool for non-linear hybrid systems. One such tool is the authors' tool Verisig \cite{ivanov2021b}. To encode the regional perception bounds in a hybrid system, we add transitions between the plant $f$ and controller $h$ with guards determined by the regions ($x_k\in \mathcal{S}_i$) and resets that inflate the measurement model $g(x_k)$ with the corresponding error interval $[-\eta_i, \eta_i]$.

Introducing additional transitions and adding uncertainty directly to reachable sets introduces a \textit{scalability challenge}: reachable sets that intersect with multiple regions must be considered separately. This leads to longer verification time as each ``branch'' must be verified separately. However, in verification of highly-branching systems, the reachable sets often have significant overlap (and some may even be strict subsets of others), leading to redundant and time-costly verification \cite{frehse2011spaceex}. Though it is not trivial to determine if reachable sets overlap in general, one can design heuristic measures of reachable set similarity and merge branches if they look too similar. 

Based on this intuition, we introduce a \textit{cluster-and-enclose method} for improving scalability in highly-branching hybrid system verification tasks: we merge similar (overlapping) reachable sets by first identifying similar branch clusters based on a Euclidean distance between their reachable set centroids, and then enclosing them with a new reachable set, centered around the average of the branches to be enclosed. First, we will describe our method for finding an enclosure of multiple reachable sets. Then, we will describe a clustering method to group branches based on similar reachable sets such that the additional overapproximation error induced by enclosing them is minimal.

\subsection{Enclosure Method for Taylor Model-Based Reachable Sets}

Following the TM Definition~\ref{def:TMs}, we omit the specifics of the TM operations and refer the reader to~\cite{makino03} for details. Our methods only use addition, subtraction, and scalar multiplication, which simply require applying the analogous operations to the polynomial and interval components independently. Moreover, we utilize interval evaluation of the TMs to produce bounding boxes \cite{makino03}. This involves substituting the interval domains of the TMs into the polynomials and conducting interval analysis to get a single overapproximation of the range of a given TM. We next propose a method to enclose a set of TMs into a single reference TM with an inflated remainder to make a valid enclosure.

To form a valid enclosure, we start from a reference TM $\bar{T}$ around which to enclose the $K$ TMs. To account for the maximum and minimum deviations from any of the $K$ models and the reference model, we create an inflated interval. While the true maximum and minimum deviations are difficult to describe analytically, can conduct an interval evaluation on the difference TMs to find conservative bounding boxes of the true difference sets. Taking the minimum infimum and maximum supremum of the bounding boxes ensures a valid enclosure. In particular: 
\begin{definition}[Taylor Model Union Enclosure]
Given $K$ Taylor Models $T_1,\dots, T_K$ defined on a common domain $D$ overapproximating the same function $f$ and a reference model $\overline{T}=(\overline{p}, \overline{I})$, a Taylor Model union enclosure of the $K$ TMs about $\bar{T}$ is $T_\cup = (p_\cup, I_\cup)$ such that:
$T_\cup \supseteq \{T_1,\dots,T_n\}$ and
\begin{align}\label{eq:TM_enclosure}
    \begin{split}        
        p_\cup =\overline{p}, \quad
        ~~ I_\cup = \overline{I} 
         ~+~ \left[\min_{i=1,\dots, K} \{0, \inf (T_i-\overline{T})\},~ \max_{i=1,\dots,K} \{0, \sup(T_i-\overline{T})\}\right]
    \end{split}
\end{align}
\end{definition}
In effect, the inflated interval in~\eqref{eq:TM_enclosure} accounts for the largest positive and negative deviations any of the $K$ models can differ from the reference model. To compute the $\inf$ and $\sup$ in practice, we first conduct interval analysis on the difference models $T_i-\bar{T}$, then find the $\inf$ and $\sup$ as the lower bound and upper bounds of the resulting bounding interval, respectively. We include $0$ in the minimum and maximum operators for the case where all the $K$ models are above or below the reference model.

For clarity, our definition assumes scalar-valued functions, but TMs can also be vector-valued. However, such vector-valued functions can be interpreted as vectors of scalar-valued functions, with the dimension of the polynomial and interval components increasing accordingly. 

While we don't specify the reference TM $\bar{T}$ in this definition, we choose as our reference the average of the $K$ TMs: $\overline{T} =  \frac{1}{K}\sum_{i=1}^K T_i$, where, according to TM arithmetic, $\overline{T} = (\overline{p}, \overline{I})= ( \frac{1}{K}\sum_{i=1}^K p_i,\frac{1}{K}\sum_{i=1}^K I_i) $. This choice provides a reference model that is, by definition, ``close'' to all of the models to be enclosed. A discussion of other choices of reference model can be found in Section~\ref{sec:conclusion}. Now that we have a method to compute an enclosure of $K$ TMs, we seek to find similar TMs that can be enclosed together, reducing the number of redundant reachable set branches and thereby improving verification time.

\renewcommand{\algorithmcfname}{Algorithm}
\setcounter{algocf}{1}

\begin{algorithm}[ht!]
\small
\caption{Cluster and Enclose Algorithm for Verification Branch Reduction}
\label{alg:clustering}
\DontPrintSemicolon
\SetKwFunction{Crossover}{Crossover}
\SetKwFunction{Mutate}{Mutate}
\SetKwFunction{KMEANS}{K-MEANS}
\KwIn{TM vectors for all $K$ branches  $\mathcal{T} = \{\bm{T}_{1},\dots,\bm{T}_K\}$ where $\bm{T}_i = [T_{i,1}, \dots,T_{i,d} ]$ are the TMs for each of $d$ state variables, and maximum branches desired $B$.}
\KwOut{TM vector enclosures for $b =\min (K,B)$ branches and $d$ state variables ~$\tilde{\mathcal{T}} = \{\tilde{\bm{T}}_{1},\dots,\tilde{\bm{T}}_b\}$.}
\textbf{Initialize:} Output Set $\tilde{\mathcal{T}} = \emptyset $, Reachable set midpoints $\bm{\mathcal{X}} = \emptyset$.
\BlankLine
\If{$K > B$}{
    \ForEach{$\bm{T}_i \in \mathcal{T}$}{
    Compute interval bounding Boxes $\bm{\mathcal{B}}_{i} \leftarrow \textit{intervalEval}(\bm{T}_{i})$. \;
    Let $\overline{\bm{B}_i},\underline{\bm{B}_i} \in \mathbb{R}^d$ be the upper and lower bounds of $\mathcal{B}_i$, respectively.\;
    Compute midpoints $\bm{x}_{i} = (\underline{\bm{B}_i}+\overline{\bm{B}_i})/2$. \;
    $\mathcal{X} \leftarrow \mathcal{X} \bigcup \{\bm{x}_i\}$
    }
    Get index assignments for $B$ clusters: $\mathcal{I} = \{I_1, \dots, I_B \} \leftarrow $ \KMEANS{$\mathcal{X}$}\;
    \ForEach{\text{Cluster:} $I \in \mathcal{I}$}{
    \If{$\mid {I} \mid ~ =1$  }{
        Single branch, $\tilde{\bm{T}}$, in cluster, no enclosure required.\; 
        $\tilde{\mathcal{T}} \leftarrow \tilde{\mathcal{T}} \bigcup \{ \tilde{\bm{T}}\}$\;
    }
    \Else{
        Initialize $\tilde{\bm{T}} \gets [ ~]$\;
        \For{$j=1,\dots,d$}{
            Define set to be enclosed: $\mathcal{E} = \{T_{i,j}\mid i\in I\}$\;
            Compute Union Enclosure TM from (\ref{eq:TM_enclosure}) for state $j$ over cluster $I$: $E \leftarrow \textit{computeUnionEnclosure}(\mathcal{E})$ \;
            \If{$E$ has large remainder}{
                $E \leftarrow \textit{shrinkWrap}(E)$
            }
            Append $E$ to $\tilde{\bm{T}}$\;
        }
        $\tilde{\mathcal{T}} \leftarrow \tilde{\mathcal{T}} \bigcup \{ \tilde{\bm{T}}\}$\;
    }

    }

}
\Else{
    $\tilde{\mathcal{T}} \leftarrow \mathcal{T}$\;  
}
Remove any subset branches of shrink-wrapped branches: $\tilde{\mathcal{T}} \leftarrow \textit{removeSubsets}(\tilde{\mathcal{T}})$\;
\Return{$\tilde{\mathcal{T}}$}
 
\end{algorithm}

\subsection{Branch Clustering and Consolidation}
To reduce the total number of TMs, we need to \textit{find clusters of TMs} that would add minimal approximation error after the enclosure is performed. Intuitively, such clusters would have highly overlapping reachable sets, meaning the size of the added inflated interval from (\ref{eq:TM_enclosure}) would be small, as the differences between each TM and the average TM would be small. Additionally, because the number of branches may grow exponentially. Thus, the time required to verify can be prohibitively long and we set an upper limit on the number of clusters at each step. This approach introduces a key tradeoff: the fewer total clusters permitted at each step, the shorter the verification time, but the larger the overapproximation error added by enclosures.

Given a total number of branches $K$ and a desired maximum number of branches $B$, we use a $k$-means algorithm \cite{bishop2006pattern} that clusters branches based on the Euclidean distance between the midpoints of their reachable set bounding boxes. Algorithm~\ref{alg:clustering}
provides an outline of our approach. As a first step, in this work, we use the \textit{distance between midpoints} of each branch's reachable sets as a heuristic for branch similarity. Other distance metrics will be explored in future work, as outlined in the Discussion section.
At each timestep of the hybrid system verification, if the total number of branches $K$
exceeds the maximum permitted number $B$ we compute $B$ clusters of branches to be enclosed using the total $K$ branches using $k$-means. For each resulting cluster, we compute a union enclosure over each of the $d$ state dimensions as defined in (\ref{eq:TM_enclosure}). Finally, we apply \textit{shrink-wrapping} to the enclosure TM if the remainder is too large \cite{ivanov2021b}. In general, \textit{shrink-wrapping} refactors at TM into a new one with a smaller remainder. In Verisig, we consider a remainder large if its width is both larger than some $\epsilon$ (e.g., $\epsilon > 10^{-6}$) and if its width accounts for a significant portion of the overall range (e.g., more than $10\%$ of the entire range). In this case, we refactor the TM to be fully symbolic, so the polynomial term becomes linear and contains the original range with no remainder. Finally, we opportunistically check to see if any of the resulting enclosures are strict subsets of another shrink-wrapped branch. If all state variable bounding boxes for a given ``child'' branch are subsets of those of another, shrink-wrapped ``parent'' branch, we can simply remove the child branch, as it is redundant to the parent branch.

\vspace{-3mm}
\section{Case Studies} \label{sec:evaluation}
We evaluate our proposed statistical-symbolic verification framework on two representative case studies: Mountain Car (MC)~\cite{gymMC} and a 1/10th-scale autonomous racing car based on the F1-tenth/RoboRacer platform~\cite{f10thcar}. 
These two case studies are chosen for different complexities:
\begin{itemize}
    \item \textit{Mountain Car:} an image-based task where perception error primarily depends on position.
    \item \textit{Autonomous racing car:} a LiDAR-based task with perception error dependent on multiple states (the car's pose).
\end{itemize}
In both case studies, we compare our state-based approach with a state-of-the-art time-based CP method~\cite{cleaveland2024conformal}, which optimizes weighting parameters at each time step using linear complementarity programming, as discussed in Section~\ref{sec:bg_and_relwork}.
This method uses two holdout data sets. The first is used to solve for the optimal weighting parameters, and the second is used to set conformal bounds. We compute and compare against multiple baselines by increasing the number of trajectories used to solve for the weighing parameters until solver failure. 

 For each setting, we describe the experimental setup, including the perception and control models and data collection procedure. Then, using the collected data, we compute conformal prediction bounds and integrate them into reachability analysis to obtain high-confidence ($95\%$) reachable sets using both our proposed method and the time-based baseline. Finally, we quantitatively and qualitatively compare our method to the baseline using a variety of metrics to illustrate relative conservatism and scalability. In general, we compare the sizes of the reachable sets, test coverage, and time required to verify. These metrics demonstrate that our approach provides smaller reachable sets and stronger safety guarantees compared to the baseline.
\begin{figure}[t]
  \centering
  \begin{subfigure}[b]{0.2\textwidth}
    \includegraphics[trim={5mm 0mm 0mm 0mm},clip,width=\textwidth]{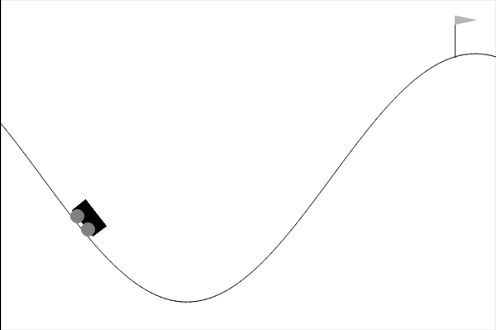}
    \caption{Canonical}
    \label{fig:mc_canonical}
  \end{subfigure}\hfill
  \begin{subfigure}[b]{0.2\textwidth}
    \includegraphics[trim={10mm 0mm 60mm 50mm},clip,width=\textwidth]{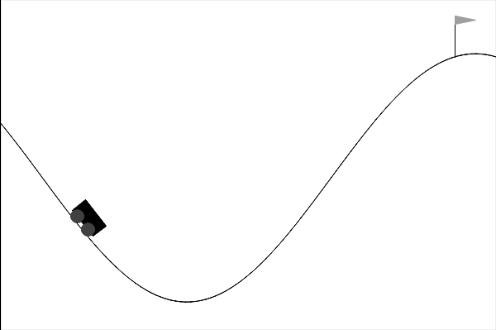}
    \caption{High-contrast}
    \label{fig:mc_hc}
  \end{subfigure}\hfill
  \begin{subfigure}[b]{0.2\textwidth}
    \includegraphics[trim={10mm 0mm 60mm 50mm},clip,width=\textwidth]{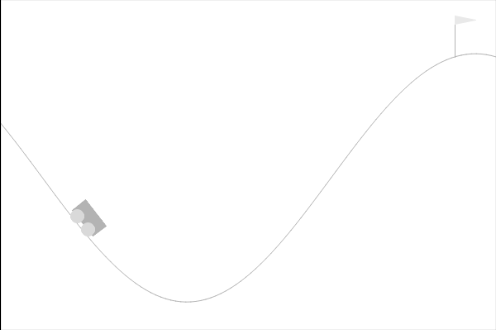}
    \caption{Low-contrast}
    \label{fig:mc_lc}
  \end{subfigure}\hfill
  \begin{subfigure}[b]{0.2\textwidth}
    \includegraphics[trim={10mm 0mm 60mm 50mm},clip,width=\textwidth]{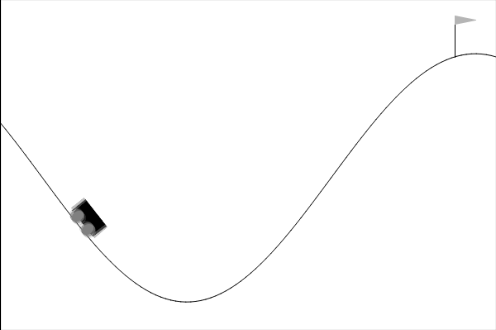}
    \caption{Blurred}
    \label{fig:mc_blurred}
  \end{subfigure}
  \caption{Examples of noise in the MountainCar environment. The noisy images were cropped to emphasize the car for visibility.}
  \label{fig:mc_examples}
  \vspace{-5mm}
\end{figure}

\subsection{Mountain Car}
Mountain Car is a standard reinforcement learning benchmark in which an underpowered car must reach the top of the right hill as shown in Figure \ref{fig:mc_canonical}. Because the car is underpowered, a successful controller must first utilize the left hill to gain momentum before reaching the goal on the right side. The dynamics for the standard system are as follows:
\begin{align}
\label{eq:mc_dynamics}
    p_{k+1} = p_k + v_k; ~~\quad
    v_{k+1} = v_k + 0.0015u_k -0.0025\cos{(3p_k)}.
\end{align}
where $p\in[-1.2, 0.6]$ is the position, $v\in [-0.07, 0.07]$ is the velocity, and $u \in [-1.0, 1.0]$ is the control thrust. The initial position is the bottom of the mountain with $p_0 \in [-0.55, -0.45]$ and at rest with $v_0 = 0$.

\subsubsection{Control and Perception Models}

For the MC case study, both the controller and the perception model are neural networks.
The controller $h$ consists of two hidden layers with 16 neurons (sigmoid activations) and a single tanh output neuron.
It was pre-trained and pre-verified in prior work~\cite{ivanov19} to reach the top of the hill with a reward of at least 90 when observing ground truth velocity and position.\footnote{The initial set verified in prior work was $p_0{\in}[-0.59, -0.45]$. In our experiments, we use the more robust range $p_0{\in}[-0.55, -0.45]$, since the controller can be unstable outside of that range.}
The perception model is a convolutional neural network (CNN) that takes grayscale $400 \times 600$ images as input, and predicts the 1D position of the car. The CNN's architecture consists of two convolutional layers with 16 channels each, two fully connected layers (with 100 neurons each and ReLU activations, and a single scaled tanh output corresponding to the MC position in $[-1.2, 0.6]$. 
To improve robustness to visual variations, it is trained on 1000 images generated from 100 positions and 9 distinct contrast levels evenly spaced from low-contrast (see Figure~\ref{fig:mc_lc}) to high-contrast (see Figure~\ref{fig:mc_hc}). The model was trained 1000 epochs with MSE loss.


\begin{figure}[t]
  \centering
  \begin{subfigure}[t]{0.36\textwidth}
    \includegraphics[trim={1mm 0mm 5mm 7mm},clip,width=\textwidth]{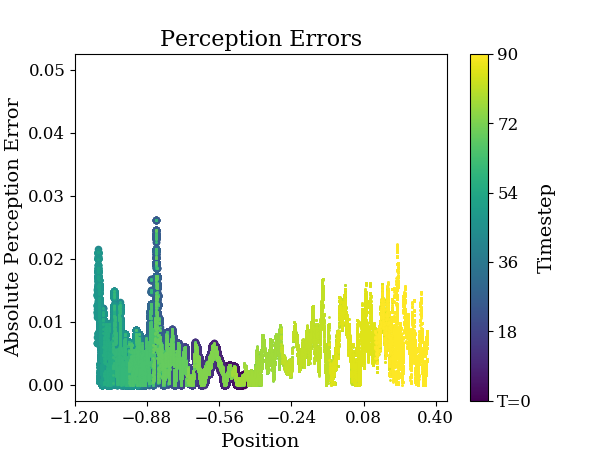}
    \caption{Perception errors for all 4,000 trajectories in $D$ showing heteroskedasticity over time and state.}
    \label{fig:mc_perception_errors}
  \end{subfigure}\hfill
  \begin{subfigure}[t]{0.36\textwidth}
    \includegraphics[trim={1mm 0mm 5mm 7mm},clip,width=\textwidth]{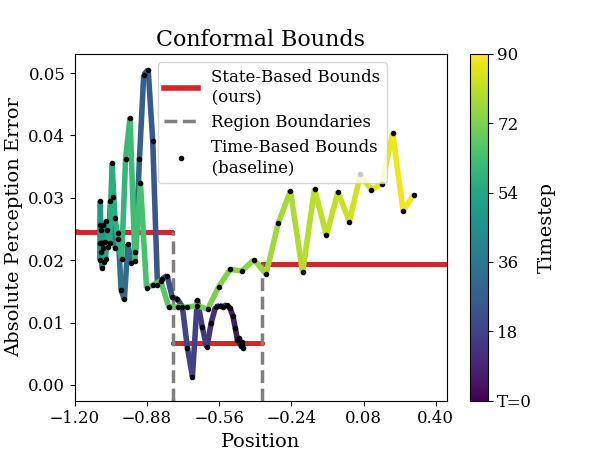}
    \caption{State-based regional conformal bounds for $M{=}3$ regions vs.\ the baseline bounds.}
    \label{fig:mc_conformal_bound_comp}
  \end{subfigure}\hfill
  \begin{subfigure}[t]{0.21\textwidth}
    \includegraphics[trim={1mm 0mm 5mm 7mm},clip,width=\textwidth]{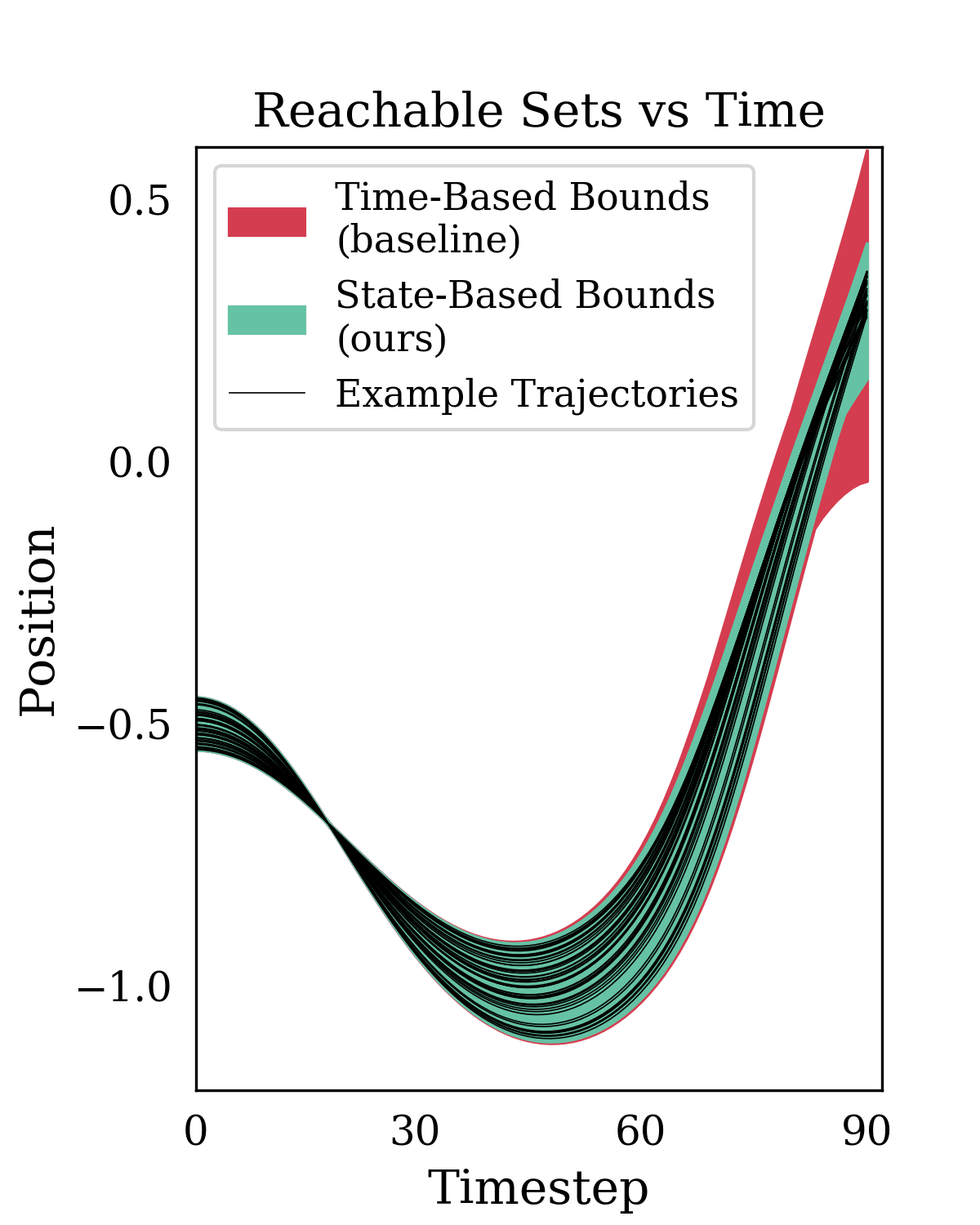}
    \caption{Reachable sets for state-based vs time-based conformal bounds.}
    \label{fig:reachtubes}
  \end{subfigure}
  \vspace{-3mm}
  \caption{Perception errors and their respective bounds under our method and a time-based baseline.}
  \label{fig:state_time_errs}
\vspace{-5mm}
\end{figure}

\subsubsection{Trajectory Data Collection}

We generate a trajectory dataset $D$ by simulating the Mountain Car environment with the perception model described previously.
To evaluate the robustness of our verification framework under realistic distribution shifts, we deploy the perception model on rollouts with blur noise applied to images (See Figure \ref{fig:mc_blurred}).
This simulates realistic perception degradation (e.g., sensor noise or poor lighting) and allows us to assess whether we can provide valid high-confidence guarantees under \emph{distribution shift}. The perception errors are computed as the absolute difference between the predicted position and the ground truth position.
The perception errors for this dataset are shown in Figure \ref{fig:mc_perception_errors} -- note the drastic heteroskedasticity over the state space exposed by the added blur noise.


Altogether, we collect 4000 trajectories from MC. The initial states are uniformly sampled from [-0.55, -0.45] and are simulated for 90 steps.
We split each trajectory dataset evenly into a 2,000-trajectory calibration set $D_{\text{cal}}$ 
and a 2,000-trajectory test set $D_{\text{test}}$. 
The calibration set $D_{\text{cal}}$ is later further divided into 
$D_{\text{reg}}$ for optimizing region partitions via~\eqref{eq:optimziation_cuts} 
and $D_{\text{conf}}$ for computing regional conformal bounds after splitting. 
The test set $D_{\text{test}}$ is reserved for evaluating the conservativeness 
of the resulting probabilistic guarantees.

\subsubsection{Conformal Bounds Computation}
We first optimize the state space partitions and subsequently calculate the corresponding regional conformal bounds with a target trajectory-wide coverage level of 95\% ($\alpha=0.05$).
To solve for the regions via~\eqref{eq:optimziation_cuts}, we randomly select 500 trajectories from $D_{\text{cal}}$ to form $D_{reg}$.
These trajectories are exclusively for optimizing regions. The remaining 1,500 trajectories are allocated to $D_{conf}$ and are used to compute the regional conformal bounds. We use Algorithm~\ref{alg:ga_short} for 1-6 cuts in the position dimension, producing results for 2 to 7 regions. In all instances of the Genetic Algorithm, we use $G{=}100$ generations, population size $P=1,000$, mutation probability $p_m^{\mathcal{C}}=p_m^{\alpha}=0.3$, crossover probability $p_c^{\mathcal{C}}=p_c^{\alpha}{=}0.5$, and discount factor $\gamma=0.9$.

For the time-based baseline, we randomly choose a holdout set of trajectories from $D_{\text{cal}}$ to solve for the weighting parameters, and use the remaining to set the time-based conformal bounds. We vary the size of $D_{\text{cal}}$ between 50, 100, 150, and 200 trajectories for parameter optimization -- no improvements were observed for larger numbers, at substantial computational cost. Figure~\ref{fig:mc_conformal_bound_comp} shows an example of the region-based conformal bounds for $M{=}3$ regions and the time-based bounds on an example trajectory. 

\begin{table}[t]
\centering
\small
\renewcommand{\arraystretch}{1.0}
\begin{tabularx}{\linewidth}{>{\centering\arraybackslash}X l >{\centering\arraybackslash}X >{\centering\arraybackslash}X >{\centering\arraybackslash}X >{\centering\arraybackslash}X}
\hline
\textbf{Cuts} & \textbf{Solution Type} & \textbf{RSS $\downarrow$} & \textbf{Max RSS $\downarrow$} & \textbf{Time [hr]$\downarrow$} & \textbf{Coverage$\uparrow$} \\
\hline
\multirow{2}{*}{6} & Fixed Confidence; $M$=7  & \textbf{4.360} & 0.115 & 256.0 & 95.2\% \\
                   & Dynamic Confidence; $M$=7 & 4.362 & \textbf{0.114} & 282.7 & 94.8\% \\
\multirow{2}{*}{5} & Fixed Confidence; $M$=6  & 4.933 & 0.154 & 159.3 & 94.7\% \\
                   & Dynamic Confidence; $M$=6 & 4.841 & 0.148 & 163.0 & 94.7\% \\
\multirow{2}{*}{4} & Fixed Confidence; $M$=5  & 5.141 & 0.163 & 131.0 & 94.7\% \\
                   & Dynamic Confidence; $M$=5 & 5.110 & 0.165 & 128.0 & 94.7\% \\
\multirow{2}{*}{3} & Fixed Confidence; $M$=4  & 5.266 & 0.168 & 107.3 & 95.4\% \\
                   & Dynamic Confidence; $M$=4 & 5.254 & 0.166 & 110.6 & 95.2\% \\
\multirow{2}{*}{2} & Fixed Confidence; $M$=3  & 6.072 & 0.203 & 100.1 & 95.3\% \\
                   & Dynamic Confidence; $M$=3 & 6.016 & 0.202 & 99.5 & \textbf{96.1}\% \\
\multirow{2}{*}{1} & Fixed Confidence; $M$=2  & 9.845 & 0.456 & 116.4 & 95.1\% \\
                   & Dynamic Confidence; $M$=2 & 9.496 & 0.434 & 96.4 & 95.2\% \\
\hline
\multirow{4}{*}{Baseline} & Time-based CP (200 traj.)~\cite{cleaveland2024conformal} & 5.508 & 0.177 & \textbf{48.4} & 94.2\% \\
                           & Time-based CP (150 traj.)~\cite{cleaveland2024conformal} & 6.219 & 0.240 & 48.9  & 95.2\% \\
                           & Time-based CP (100 traj.)~\cite{cleaveland2024conformal} & 6.611 & 0.225 & 49.1 & 95.3\% \\
                           & Time-based CP (50 traj.)~\cite{cleaveland2024conformal}  & 5.698 & 0.210 & 48.5 & 94.8\% \\
\hline
\end{tabularx}
\caption{MC case study: Comparison of fixed/dynamic confidence and time-based methods in terms of reachable set size (RSS), the max RSS at any individual timestep, verification time, and test coverage. $M$ is the number of subregions partitioned by the cuts.}
\label{tab:MC_verification_results}
\vspace{-7mm}
\end{table}

\subsubsection{Reachable Set Evaluation}
To evaluate our solution to Problem~\ref{prob:reachability}, we compute reachable sets via reachability analysis in Verisig over the 90 steps. 
To compute reachable sets with our state-based conformal bounds, we first get the regional conformal bounds following the approach in Section~\ref{sec:state_regions}. 
For the time-based baseline method, a different perception error bound is applied at each time step, as per the bounds shown in Figure~\ref{fig:mc_conformal_bound_comp}. 
For visual evaluation, we show the the full reachable sets produced by our method ($M{=}7$) as compared to the time-based baseline in Figure~\ref{fig:reachtubes}.

The full comparison between the two methods is provided in Table~\ref{tab:MC_verification_results}, where we vary the number of cuts used in our approach, as well as the number of trajectories used in the time-based baseline. We compare the methods in terms of reachable set size (RSS) by totalling the widths of the reachable set intervals at each timestep. We also report the Maximum RSS over any timestep as well as the test coverage achieved by our method against the baseline.\footnote{For each experiment, the verification for the initial set $\mathcal{X}_0$ was carried out in parallel with 200 sub-intervals of size $10^{-4}$.} 
Test coverage is measured using 2,000 trajectories in $D_\text{test}$ to determine the proportion of trajectories contained within the noise error bounds, reflecting their conservativeness. 
Ideal bounds should achieve high test coverage and small maximum RSS while maintaining efficient verification time.  

\subsubsection{Results and Discussion}
In this case study, our approach results in significantly smaller reachable set sizes than the time based approach while still maintaining high test coverage. Specifically, our best method in terms of RSS (Fixed Confidence; $M=7$ regions) attains a set size of $4.360$, significantly smaller than the best time-based method which achieved an RSS of $5.508$. Importantly, our method attains the smaller reachable set size with higher test coverage ($95.2\%$ as compared to $94.2\%$ for the time-based method). This demonstrates our method can be further improved if tighter conformal bounds are obtain (as discussed at the end of this section).

In general, our method outperformed the baselines in all settings with at least $M=4$ regions in terms of both set size metrics. However, the time required to verify generally increases with the number of regions considered, as branches are induced due to additional region boundaries. The time-based approach does not suffer from branching as the conformal error bound is applied at each timestep and does not depend on the reachable set itself, which intersects multiple regions more frequently as its size increases in the state-based case. The dynamic confidence provides improvement in RSS and max RSS against the fixed confidence solutions in all but one setting, with a negligible impact on verification time or test coverage overall.
While we did apply our \textit{cluster-and-enclose} method to this case study with a maximum number of branches $B=300$, the number of branches per timestep never exceeded that number. The influence of this scalability improvement is more obvious in the following higher-dimensional case study.

\vspace{-3mm}
\subsection{Autonomous Racing Car}
The second case study is on a 1/10-scale autonomous racing car that uses LiDAR observations with neural perception to navigate a turn in a racing track. The car dynamics are borrowed from \cite{ivanov20a} and are assumed to follow a bicycle model (with no slipping/friction) as follows:
\begin{align}
  \label{eq:f1_dynamics}
    \dot{x} = v cos(\theta); ~~\quad 
    \dot{y} = v sin(\theta); ~~\quad
    \dot{v} = - c_av + c_ac_m(u - c_h); ~~\quad
    \dot{\theta} = \frac{V}{l_f + l_r}tan(\delta).
\end{align}
where $v$ is linear velocity, $\theta$ is orientation, $x$ and $y$ are the two-dimensional position; $u = 16$ is the constant throttle input, and $\delta$ is the heading input (ranging between -15 and 15 degrees); $c_a$ is an acceleration constant, $c_m$ is a car motor constant, $c_h$ is a hysteresis constant, and $l_f$ and $l_r$ are the distances from the car's center of mass to the front and rear, respectively. Parameter values were identified by \cite{ivanov20a} as: ${c_a = 1.633}, {c_m = 0.2}, {c_h = 4}, {l_f = 0.225m}, {l_r = 0.225m}$. Note that the car model is in continuous time; however, the controller is sampled at discrete time steps (at 10Hz), the system effectively operates in discrete time.


%
\subsubsection{Perception Pipeline}
The perception model for the LiDAR-based autonomous racing car consists of a CNN followed by a particle filter \cite{arulampalam2002tutorial}. Given a 21-dimensional LiDAR input scan, the CNN is trained to predict the signed error in the relative heading and distance of the car to a reference trajectory. See Figure~\ref{fig:f110_error_diagram} for an illustration of the predicted values. The CNN uses three 1D convolutional + max pooling layers with the following number of channels: $[64,128,128]$. Finally, the network has two fully-connected layers with 50 neurons each and two output neurons. All internal activations are ReLU, and the output neurons have Tanh activations

Inferring the relative position of the car against the reference trajectory is a difficult task for a neural network alone as small changes in the position of the car can result in dramatic changes in the input lidar scans \cite{sarker2024comprehensive}. To address this challenge, we additionally apply a particle filter to our observations~\cite{thrun_probabilistic_2002}. For the particle filter, we simulate 2,000 particles with process (dynamics) noise of $0.005$ on each particle, and measurement standard deviation $7.5\times10^{-5}$ for selecting particles. The particles are instantiated uniformly over the entire initial position set, a $0.2m\times 1m$ box at the center of the first hallway (see Figure \ref{fig:f11_perception_errors}). We define the entire perception pipeline (including the particle filter) with $g: \mathbb{R}^{21}\mapsto \mathbb{R}^2$ which takes an input LiDAR scan $y\in\mathbb{R}^{21}$ and outputs estimates of heading and distance errors $(\hat{e}_{\theta},\hat{e}_d)$ against the reference trajectory. The model was trained with trained MSE loss on 2,998,195 pairs and of lidars scans and associated reference errors, generated by discretizing the 3D position and heading throughout the hallway environment.

\subsubsection{Controller.}
The controller $h:\mathbb{R}^2 \mapsto\mathbb{R}$
is a proportional bang-bang controller which takes in the heading and distance errors and outputs the maximum positive or negative steering action based on the sign of the total error term: $\delta = h(\hat{e}_\theta, \hat{e}_d) = \delta_{\mathrm{max}}\cdot\mathrm{sign}(\kappa_p (\hat{e}_\theta+ \hat{e}_d))$.  In this case study, the maximum steering angle is $\delta_{\mathrm{max}}{=}15 $ degrees, and the proportionality constant is $\kappa_p{=}1$. This controller model induces discretely different control actions based on the sign of the error term and thus branches significantly during verification when the reachable set for the error term contains 0.

\begin{figure}[t]
  \centering
  \begin{subfigure}[b]{0.26\textwidth}
    \includegraphics[trim={0mm 10mm 0mm 0mm},clip,width=\linewidth]{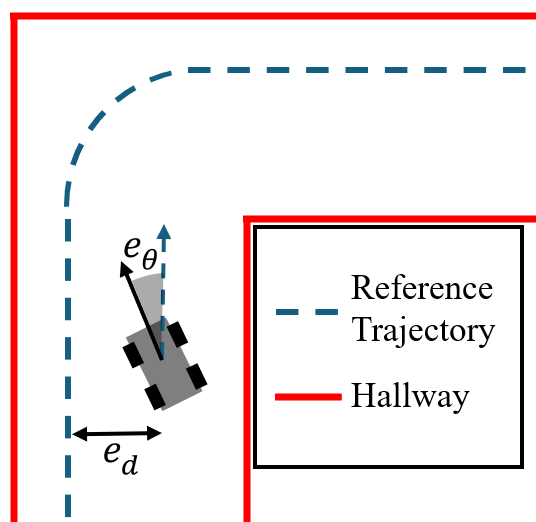}
    \caption{Illustration of heading and position errors against reference trajectory.}
    \label{fig:f110_error_diagram}
  \end{subfigure}\hfill
  \begin{subfigure}[b]{0.32\textwidth}
    \includegraphics[trim={1mm 0mm 5mm 7mm},clip,width=\textwidth]{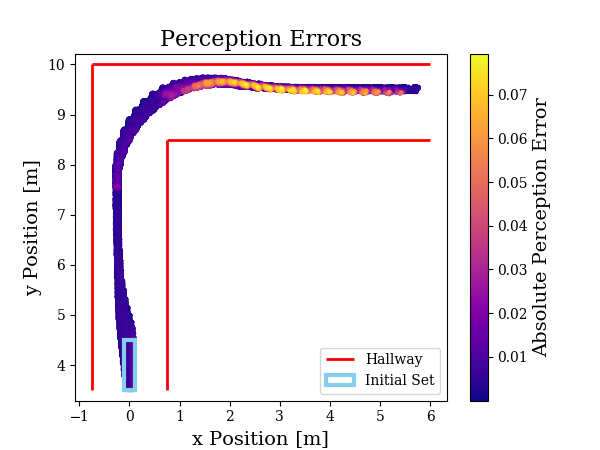}
    \caption{Perception errors for all 4,000 trajectories in $D$ showing heteroskedasticity over time and state.}
    \label{fig:f11_perception_errors}
  \end{subfigure}\hfill
  \begin{subfigure}[b]{0.32\textwidth}
    \includegraphics[trim={1mm 0mm 5mm 7mm},clip,width=\textwidth]{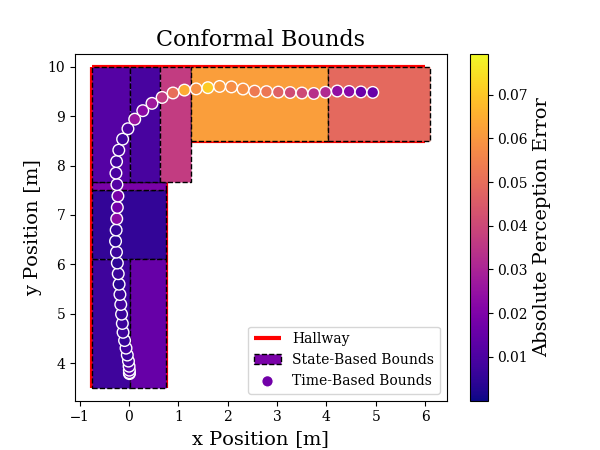}
    \caption{State-based dynamic regional conformal bounds for 4$x$ cuts and 3$y$ cuts with $M{=}9$ regions vs.\ the baseline bounds.}
    \label{fig:f110_conformal_bound_comp}
  \end{subfigure}
  \caption{Perception errors and their respective bounds under our method and a time-based baseline.}
  \label{fig:f110_state_time_errors}
  \vspace{-5mm}
\end{figure}


\subsubsection{Trajectory Data Collection}
We generate a dataset $D$ of 4,000 trajectories with initial positions in a $0.2m\times 1m$ box at the center of the first hallway (see Figure \ref{fig:f11_perception_errors}) and run for 50 timesteps with a fixed throttle and the perception and control pipeline defined above. The perception errors are computed as the absolute difference between the total reference error $\hat{e} = \hat{e}_\theta+ \hat{e}_d$ and the ground truth total reference error. Figure~\ref{fig:f11_perception_errors} shows the heteroskedasticity of the perception errors with respect to 2D position. The majority of the error is concentrated in regions near the corner, where perception is more difficult.

\subsubsection{Conformal Bound Computation}
To compute the regions and associated conformal bounds, we again split the trajectory dataset evenly into a 2,000-trajectory calibration set $D_{\text{cal}}$ 
and a 2,000-trajectory test set $D_{\text{test}}$. 
The calibration set $D_{\text{cal}}$ is further divided evenly into a 1,000 trajectory 
$D_{\text{reg}}$ set for optimizing region partitions via~\eqref{eq:optimziation_cuts} 
and a 1,000 trajectory set $D_{\text{conf}}$ for computing regional conformal bounds of the optimized regions. The test set $D_{\text{test}}$ is again reserved for computing test coverage. 

While the state space has four dimensions in total, we choose to optimize regions in two of the dimensions: $x$ and $y$. There are many ways to set cuts in each of the two dimensions. For example, with two total cuts, there are three settings: two in $x$, none in $y$; one in each $x$ and $y$; and none in $x$ and two in $y$. In total, we ran all possible settings for 2-7 total cuts. Then, for each number of total cuts, we chose the solution with the lowest loss. Finally, for these best solutions, we compute the conformal bounds per region using the remaining 1,000 trajectories in the $D_{conf}$ dataset. This process is carried out for both dynamic and fixed confidences. Throughout, we solve~\eqref{eq:optimziation_cuts} using Algorithm~\ref{alg:ga_short} with $G{=}100$ generations, population size $P{=}1,000$, mutation probability $p_m^{\mathcal{C}}{=}p_m^{\alpha}{=}0.3$, crossover probability $p_c^{\mathcal{C}}{=}p_c^{\alpha}{=}0.5$, and discount factor $\gamma{=}0.925$.

Baseline time-based bounds are computed for four different data splits for $D_{\text{cal}}$. We compute the conformal reweighting parameters using 100, 200, 300, and 500 trajectories. Then, set bounds using the remaining data in $D_{\text{cal}}$. The solver for the 500 trajectory setting did not finish due to a memory error. Example bounds for our region-based approach and the time-based bounds are shown in Figure~\ref{fig:f110_conformal_bound_comp}.

\begin{table}[t]
\centering
\small
\renewcommand{\arraystretch}{1.0}
\begin{tabularx}{\linewidth}{>{\centering\arraybackslash}X l >{\centering\arraybackslash}X >{\centering\arraybackslash}X >{\centering\arraybackslash}X >{\centering\arraybackslash}X}
\hline
\textbf{Cuts} & \textbf{Solution Type} & \textbf{RSS [m$^2$] $\downarrow$} & \textbf{Safe Dist. [cm]$\uparrow$} & \textbf{Time [hr]$\downarrow$} & \textbf{Coverage$\uparrow$} \\
\hline
\multirow{2}{*}{7} & Fixed Confidence 4x3y; $M$=9   & 2.498 & 20.87 & \textbf{152.2} & \textbf{97.7\%} \\
                   & Dynamic Confidence 4x3y; $M$=9 & \textbf{2.456} & 20.95 & 212.5 & 97.5\% \\
\multirow{2}{*}{6} & Fixed Confidence 3x3y; $M$=8   & 2.497 & 20.85 & 213.1  & 97.6\% \\
                   & Dynamic Confidence 3x3y; $M$=8 & 2.469 & \textbf{21.12} & 211.0 & 97.4\% \\
\multirow{2}{*}{5} & Fixed Confidence 4x1y; $M$=8  & 2.496 & 20.89 & 233.3 & 97.1\% \\
                   & Dynamic Confidence 3x2y; $M$=7 & 2.483 & 21.01 & 214.0 & 97.5\% \\
\multirow{2}{*}{4} & Fixed Confidence 3x1y; $M$=6   & 2.526 & 20.12 & 213.2 & 96.7\% \\
                   & Dynamic Confidence 3x1y; $M$=6 & 2.485 & 20.15 & 208.4 & 97.5\% \\
\multirow{2}{*}{3} & Fixed Confidence 2x1y; $M$=5   & 2.537 & 20.12 & 214.0 & 96.6\% \\
                   & Dynamic Confidence 2x1y; $M$=5 & 2.498 & 20.05 & 211.6 & 97.3\% \\
\multirow{2}{*}{2} & Fixed Confidence 1x1y; $M$=4   & 2.648 & 16.03 & 231.8 & 97.3\% \\
                   & Dynamic Confidence 1x1y; $M$=4 & 2.596 & 16.55 & 225.7 & 97.1\% \\
\hline
\multirow{4}{*}{Baseline} & Time-based CP (500 traj.)~\cite{cleaveland2024conformal} & DNF   & DNF   & DNF   & DNF   \\
                           & Time-based CP (300 traj.)~\cite{cleaveland2024conformal} & 2.479 & 19.99 & 198.0 & 94.7\% \\
                           & Time-based CP (200 traj.)~\cite{cleaveland2024conformal} & 2.469 & 19.82 & 195.5 & 94.6\% \\
                           & Time-based CP (100 traj.)~\cite{cleaveland2024conformal} & 2.626 & 19.29 & 252.5 & 95.6\% \\
\hline

\end{tabularx}
\caption{F1 tenth case study: Comparison of fixed/dynamic confidence state-based solutions and time-based baseline in terms of reachable set size (RSS), safety distance, verification time, and test coverage. $M$ is the number of subregions partitioned by the cuts (e.g., cuts = 5 yields $M$=8 for fixed confidence, $M$=7 for dynamic confidence).}
\label{tab:F1_verification_results}
\vspace{-7mm}
\end{table}

\subsubsection{Reachable Set Evaluation}
\begin{wraptable}[10]{r}{.45\linewidth}
\vspace{-1mm}
\centering
\small
\renewcommand{\arraystretch}{0.9}
\begin{tabularx}{\linewidth}{>{\centering\arraybackslash}X >{\centering\arraybackslash}X >{\centering\arraybackslash}X >{\centering\arraybackslash}X}
\hline
 \textbf{Max $B$} & \textbf{RSS [m$^2$] $\downarrow$} & \textbf{Safe Dist. [cm]$\uparrow$} & \textbf{Time [hr]$\downarrow$} \\
\hline
 25 & 3.885 & 8.99 & \textbf{49.3} \\
 50 & 3.063 & 19.92 & 73.7 \\
100 & 2.614 & 20.95 & 109.4 \\
200 & 2.478 & 20.95 & 164.6\\
300 & 2.456 & 20.95 & 212.5 \\
500 & \textbf{2.446} & \textbf{20.95} & 335.8 \\
\hline
\end{tabularx}
\caption{\textit{Cluster-and-enclose}: Ablation on max branches per verification step (autonomous racing; Dynamic Confidence 4x3y, $M{=}9$).}
\label{tab:clustering_ablation}
\end{wraptable}
To compute high-confidence reachable sets as a solution to Problem~\ref{prob:reachability}, we compute reachable sets in Verisig over 50 time steps, using M discrete jumps between the dynamics and the controller, and inflating the perception error with the per-region conformal bounds found above. In all results in Table~\ref{tab:F1_verification_results}, we set the maximum number of allowable branches $B{=}300$.   

To evaluate the size and conservatism of these reachable sets, we compute two verification size metrics: RSS and verified safe distance. RSS is defined similarly to the first case study -- as is common in reachability tools, we use boxes to approximate the size of each reachable set and calculate their total area using a sweep-line algorithm~\cite{yildiz2012klee}.
Verified safe distance provides the closest distance the car comes to any of the walls during verification. Together, these two metrics provide a broader picture of reachable set size: RSS measures the overall size, and thus conservatism, of the reachable sets, and the verified safe distances measure the amount of conservatism in safety-critical areas (near the wall). When coupled with test coverage, these metrics provide a more holistic measure of tightness. Overall, the best solutions would have a small RSS, a large verified safe distance, and high test coverage.


\subsubsection{Results and Discussion}

Table~\ref{tab:F1_verification_results} summarizes our results. Overall, our proposed region-based method can produce a smaller RSS and reliably better safe distances with higher test coverages than the baseline approach with similar verification times. In particular, our best approach produces an RSS of $2.456~\text{m}^2$, while the best time-based solution achieves RSS $2.469~\text{m}^2$. Moreover, for more than 2 cuts, our method consistently outperforms the baseline in terms of verified safe distance, with all solutions verified ${>}20\text{cm}$ from the wall, a distance not achieved by time-based baselines. In all our solutions, we attain higher test coverage than the baselines, with no individual solution below $96\%$. While this is indeed higher than the target coverage of $95\%$ (likely due to the conservatism from the union bound over regions), we still achieve better reachable set sizes for an even higher coverage rate, indicating further room for improvement.

In each case, dynamic confidence improves the RSS against the fixed confidence solution, and in all but one case, also improves the verified safe distance. Intuitively, this makes sense as there are more free parameters for the optimization algorithm to search over, and thus it is easier to find a good local optimum. Importantly, all dynamic confidences also maintain (and actually exceed) the user-defined confidence level of $95\%$, indicating valid maintenance of the confidence constraints during optimization. The number of cuts and regions used above 2 cuts produce reasonably consistent results, but there is no direct correlation between the number of cuts used and the size metrics. However, this is expected as the genetic algorithm is effectively a stochastic search: as the number of dimensions, cuts, and/or regional confidences increase, more searching is required to find a good optimum with the same search parameters.

In terms of verification time, our methods are very similar to the baseline methods on average. Unlike in the MC case study, the vast majority of branching is due to the discrete bang-bang controller, rather than intersections with the region boundaries. We applied our \textit{cluster-and-enclose} procedure in the verification of all results in Table~\ref{tab:F1_verification_results} with 300 maximum branches per step. To illustrate the impact of the \textit{cluster-and-enclose} method (Algorithm~\ref{alg:clustering}) for branch reduction, we computed the metrics over a variety of values of $B$, the maximum permitted branches per step.  In Table~\ref{tab:clustering_ablation}, we show how RSS, verified safe distance, and time vary as a function of $B$. Table~\ref{tab:clustering_ablation} shows how decreasing the number of maximum branches can greatly improve verification time, but at the cost of additional over-approximation error due to forced enclosures over dissimilar branches. Notably, increasing the number of allowed branches $B$ has a continuous improvement on RSS (the total reachable set area), but has no improvement on the verified safe distance above the threshold of $B=50$. This is because the point in the trajectory where the car is closest to the wall is just after the corner (a little more than halfway through the trajectory). At that point, the number of branches has not exceeded the threshold of $50$, so there is no added overapproximation error. Instead, the overapproximation error from enclosures increases as the branches continue to grow exponentially later into the trajectory.

\vspace{-3mm}
\section{Discussion and Conclusion}\label{sec:conclusion}

Altogether, this work presented a \textit{neuro-symbolic verification framework} that integrates \textit{state-dependent conformal perception bounds} into formal reachability analysis. Unlike time-based conformal prediction methods, our region-based approach captures the heteroskedasticity of perception error over the state space, producing tighter and less conservative safety guarantees. 
We formalized the problem of reachability-informed region optimization, introduced a genetic algorithm for efficient region partitioning and confidence allocation, and proposed a branch-merging procedure to enhance verifiability of highly-branching hybrid systems. Through case studies on both image-based Mountain Car and LiDAR-based 1/10th scale autonomous racing, we demonstrated that the proposed method achieves smaller reachable set sizes as compared to state-of-the-art time-series conformal prediction for the same user-defined confidence. 

On the \textit{statistical side}, the proposed region-based conformal bounds exploit the heteroskedasticity in perception errors across the state space rather than time. By allowing dynamic confidences per region, our solutions are locally adaptive, allowing better exploitation of error heteroskedasticity than fixed confidences alone. While our methods generally result in higher test coverages than the baseline time-based approach (likely the result of applying the conservative union bound over regions), our resulting reachable sets are smaller. In this way, our methods are capable of producing notably tighter reachable sets with strong statistical coverage. An interesting future direction would be to explore the use of the linear complementarity programming~\cite{cleaveland2024conformal} to further optimize the conformal bounds over the regions, in a similar fashion to the original time-based approach. Such an optimization may further increase the tightness of our methods by avoiding the conservatism of the union bound.

On the \textit{verification side}, our \textit{cluster-and-enclose} branch reduction algorithm improves the verifiability of highly-branching hybrid systems. Such an improvement is essential for systems with many discrete modes, particularly those subject to uncertainty, like those that use noise bounds as in the presented case studies. Our method induces a key tradeoff between the number of maximum branches permitted and the amount of overapproximation error added by enclosing clusters. It also introduces a variety of future investigation directions. In particular, we note that other metrics could provide better clustering distances. One particularly promising metric would be to cluster based on the size of added overapproximation error in~\eqref{eq:TM_enclosure}. While this would likely reduce the added overapproximation error, an efficient implementation would be crucial to mitigate the additional time cost of multiple TM evaluations. Also, we assume a fixed number of maximum branches to be verified at each step. Of course, the maximum number of branches to be verified could be dynamic and instead determined by the maximum amount of overapproximation error to be induced by merging branches at each step. Finally, we chose the reference TM of our enclosures to be the average, but other choices also merit further study. For example, the reference TM could be selected to specifically minimize the size of the enclosure remainder in~\eqref{eq:TM_enclosure}. We leave the investigation of these directions for future work.


\bibliographystyle{ACM-Reference-Format}
\bibliography{all_reference.bib}

\noindent

\end{document}